\setlist[enumerate]{leftmargin=*}
\setlist[itemize]{leftmargin=*}
\newcommand{\method}{PhUSION\xspace} %
\newtheorem{observation}{Observation}
\newcommand{\nonlinearity}[1]{\sigma({#1})}
\newcommand{\proximity}[1]{\Psi({#1})}
\newcommand{\embed}[1]{\zeta({#1})}
\newcommand{\aggregate}[1]{\rho({#1})}
\definecolor{mygreen}{RGB}{34,139,34}
\newcommand{\vecv}{\mathbf{v}}
\newcommand{\simMat}{\mathbf{S}}
\newcommand{\nonlinearSimMat}{\tilde{\mathbf{S}}}
\newcommand{\transMat}{\mathbf{R}}
\newcommand{\degMat}{\mathbf{D}}
\newcommand{\lapMat}{\mathbf{L}}
\newcommand{\identMat}{\mathbf{I}}
\newcommand{\adj}{\mathbf{A}}
\newcommand{\graph}{G}
\newcommand{\numberOfNodes}{n}
\newcommand{\vertexSet}{V}
\newcommand{\matU}{\mathbf{U}}
\newcommand{\matSigma}{\mathbf{\Sigma}}
\newcommand{\matLambda}{\mathbf{\Lambda}}
\newcommand{\matV}{\mathbf{V}}
\newcommand{\matZero}{\mathbf{0}}
\newcommand{\stdBasis}[1]{\mathbf{e}_{#1}}
\newcommand{\perm}{\mathbf{P}}
\newcommand{\graphVec}{\mathbf{f}}
\newcommand{\embeddingMatrix}{\mathbf{Y}}
\newcommand{\embeddingVector}{\mathbf{y}}
\newcommand{\embeddingDimension}{d}
\newcommand{\scale}{s}
\newcommand{\nScales}{t}
\newcommand{\order}{k}
\newcommand{\window}{T}
\newcommand{\heat}{\texttt{HK}\xspace}
\newcommand{\fabp}{\texttt{FaBP}\xspace}
\newcommand{\ppr}{\texttt{PPR}\xspace}
\newcommand{\ppmi}{\texttt{PPMI}\xspace}
\newcommand{\invlap}{\lapMat^{+}}
\newcommand{\adjpower}{\texttt{Adj}\xspace}
\newcommand{\rwpower}{\texttt{RW}\xspace}
\newcommand{\linear}{\texttt{Identity}\xspace}
\newcommand{\rampedLog}{\texttt{Log}\xspace}
\newcommand{\bin}[1]{\texttt{Bin-{#1}}\xspace}
\newcommand{\threshold}{p}
\definecolor{Gray}{gray}{0.85}
\definecolor{NetMF}{gray}{0.6}
\definecolor{InfWalk}{gray}{0.8}
\definecolor{HOPE}{gray}{0.92}
\newcolumntype{g}{>{\columncolor{Gray}}c} 
\begin{document}

\title{Node Proximity Is All You Need: \\
Unified Structural and Positional Node and Graph Embedding}

\author{Jing Zhu\thanks{Computer Science \& Engineering, University of Michigan. Email: \{jingzhuu, luxingyu, mheimann, dkoutra\}@umich.edu}~\thanks{Authors contributed equally to this work.}\\%
\and
Xingyu Lu\footnotemark[1]~\footnotemark[2]\\
\and
Mark Heimann\footnote{Lawrence Livermore National Laboratory. Work partially completed while a student at the University of Michigan.}%
\and
Danai Koutra\footnotemark[1] 
}

\date{}

\maketitle

\fancyfoot[R]{\scriptsize{Copyright \textcopyright\ 2021 by SIAM\\
Unauthorized reproduction of this article is prohibited}}

\begin{abstract}
    While most network embedding techniques model the relative positions of nodes in a network, recently there has been significant interest in \textit{structural embeddings} that model node \textit{role equivalences}, 
    irrespective of their distances to any specific nodes. We present \method,
    a \underline{p}roximity-based \underline{u}nified framework for computing \underline{s}tructural and posit\underline{ion}al node embeddings, which leverages well-established methods for calculating node proximity scores.
    Clarifying a point of contention in the literature, we show which step of \method produces the different kinds of embeddings and what steps can be used by both.  Moreover, 
    by aggregating the \method node embeddings, we obtain graph-level features that model information lost by previous graph feature learning and kernel methods.
    In a comprehensive empirical study with over 10 datasets, 4 tasks, and 35 methods, we systematically reveal successful design choices for node and graph-level machine learning with embeddings.%
\end{abstract}

\section{Introduction}
\label{sec:intro}
Node embeddings model node similarities in a multi-dimensional feature space:  
the more similar two nodes are in a network, the closer they lie in this space.  Two broad categories of node similarity  are prevalent in the literature:
(i)~positional proximity, which embeds close nodes similarly~\cite{gemsurvey}; and
(ii)~structural similarity, which embeds nodes similarly if they have similar roles or patterns of interaction with other nodes,
irrespective of their relative locations~\cite{graphwave}. In turn, these similarities lead to \textit{positional} or \textit{proximity-preserving} embeddings, and \textit{structural} or \textit{role-based} embeddings, respectively.

Characterizing the relationship between proximity-preserving and structural node embeddings  %
is an open and contested problem, with recent works making opposing claims. For instance,~\citeauthor{rossi2020proxstruc} characterize these classes of methods as fundamentally different methodologically and in terms of applications~\cite{rossi2020proxstruc}.  Meanwhile,  %
concurrent work proposed a \textit{theoretical} framework in which the analogous concepts are actually equivalent for downstream tasks~\cite{equiv-prox-struc}. %
However, according to~\cite{rossi2020proxstruc}, it is unclear how this theoretical framework %
maps onto real-world graph mining methods.   %

A seminal work, NetMF~\cite{netmf}, showed that various positional node embeddings amount to the same embedding technique (matrix factorization) applied to various matrices capturing pairwise node proximity scores.  Going further, we propose \method, 
a \underline{p}roximity-based \underline{u}nified framework for computing \underline{s}tructural \textit{and} posit\underline{ion}al node embeddings. 
\method has three steps: (i) computation of pairwise node proximities, (ii) application of a nonlinear filter, and (iii) application of a dimensionality-reducing embedding function. We show which steps can be used for proximity-preserving or structural embedding and which step makes them different, revealing similarities and differences between the two classes of methods.  

Additionally, \method generalizes existing methods and yields novel ones from 35 different combinations of design choices, some of which improve on the variations studied in the literature.  We extensively perform an empirical study of possible design choices for both structural and proximity-preserving node embeddings, to understand what works and why.  In particular, nonlinear filtering has very recently been identified~\cite{infinitewalk} as a key ingredient to the success of proximity-preserving node embedding.  We analyze this observation in much greater detail for proximity-preserving embeddings and for the first time apply it to structural embeddings.

We extend \method to embed entire graphs, a problem for which separate solutions have been proposed using graph signatures and similarity scores derived from node proximity matrices~\cite{netlsd, retgk} and aggregated structural node embeddings~\cite{rgm}. Since we have shown that node proximity matrices can be used to derive structural node embeddings, we interpret previous methods~\cite{netlsd,retgk} as embedding aggregation; we use \method to learn more expressive graph features by aggregating our more informative node embeddings, that model information we show that previous works cannot.    

Our contributions are summarized as follows: 
\begin{itemize*}
    \item \textbf{Unifying Embedding Perspective}: We propose \method, which can use pairwise node proximity matrix to generate embeddings that model node similarity based on structural roles or positional proximity.  Our analysis of \method shows the technical similarities and differences between structural and proximity-preserving node embeddings, a contested open question~\cite{rossi2020proxstruc, equiv-prox-struc}.
    \item \textbf{Study of Successful Design Choices}: On benchmark tasks for proximity-preserving and structural embedding choices, we investigate the combination of node proximity matrices, nonlinear transformation, and embedding functions.  Our results uncover new insights that can improve both proximity-preserving and structural embeddings.
    \item \textbf{Graph-Level Learning}: We turn \method into a method for learning features for entire networks from their node proximity matrices based on node embedding aggregation.  We interpret previous graph kernels~\cite{retgk} and feature learning methods~\cite{netlsd} as simplified versions of \method, and show what information we can capture with more expressive design choices that these previous works cannot.
\end{itemize*}
We provide code and additional supplementary material at \url{https://github.com/GemsLab/PhUSION}.

\section{Related Work}
\label{sec:related}
\noindent \textbf{Frameworks for Node Embedding.} Node embeddings are latent feature vectors for nodes in a network that are similar for similar nodes.  Most embedding methods define node similarity in terms of \textbf{proximity} (e.g. direct or indirect connection) within a single graph.  In contrast, \textbf{structural} embedding methods capture a node's structural role independent of its proximity to specific nodes; this independence makes embeddings comparable across distant parts of a graph~\cite{jin2020understanding} or separate graphs~\cite{xnetmf, rgm}.  Both kinds of embeddings may be obtained using a diverse range of shallow and deep learning methods.  For more information, we refer the reader to a survey ~\cite{gemsurvey} on proximity-preserving or positional embeddings, and a recent comprehensive empirical study on structural or role-based embeddings~\cite{jin2020understanding}.  %

The plethora of node embedding methods has raised interest in finding unifying frameworks for different methods, which can also lead to new technical advances.  For example, many proximity-preserving embedding methods %
were shown to implicitly factorize different proximity-based node similarity matrices; this insight inspired the NetMF method based on explicit matrix factorization~\cite{netmf}.  
It is known that many (proximity-preserving) node embedding methods can be summarized as a two-step process of node similarity matrix construction and dimensionality reduction~\cite{yang2017fast}.  However, \method is the first framework to subsume both proximity-preserving and structural embedding methods.  Moreover, in light of recent work~\cite{infinitewalk}, we carefully study a third step of applying a nonlinearity before performing dimensionality reduction.

\noindent \textbf{Graph Comparison.} For comparing entire graphs, aggregating node embeddings (as we do) is competitive to deep neural networks, graph kernels, and feature construction~\cite{rgm}.  Because a graph's node proximity matrix captures important information, many works have sought to use this \emph{within}-graph information for \emph{cross}-graph comparison.  A challenge is that nodes in different graphs may not correspond.  Feature learning method NetLSD~\cite{netlsd} and graph kernel RetGK~\cite{retgk} solve this problem by only considering node self-similarities, which forgoes directly modeling a node's similarity to other nodes (cross-\emph{node} similarities).  Other graph similarity functions such as DeltaCon~\cite{koutra2013deltacon} model cross-node similarities, but are restricted to graphs defined on the same set of nodes.  However, \method can model within-graph cross-node similarities for more expressive general cross-graph comparison.   

\section{{Unified Theoretical} Framework}
\label{sec:framework}
In this section, we present the abstract steps of our \method framework for node and graph feature learning, before describing concrete choices in the next section.

\vspace{0.1cm}
\noindent \textbf{Preliminaries.} We consider a graph $\graph$ with node set $\vertexSet$ and adjacency matrix $\adj$ containing edges between nodes.  We learn an $\numberOfNodes \times \embeddingDimension$ matrix $\embeddingMatrix$ of $\embeddingDimension$-dimensional node embeddings, where the $i$-th row $\embeddingMatrix_i$ is a feature representation for node $i$.  For ease of reference, we define common quantities for graph learning and node embedding, along with parameters specific to certain node proximity functions, in Tab.~\ref{tab:dfn}. 
\begin{table}[h]
\caption{Symbols and definitions}
\centering
\resizebox{\columnwidth}{!}
 {
\begin{tabular}{@{}p{1.6cm}c|l@{}}
\toprule
\multicolumn{2}{c|}{\textbf{Symbol}} & \textbf{Definition}\\\midrule
\multirow{6}{1.5cm}{Standard graph matrices}
&$\adj$&Adjacency matrix\\
&$\degMat$&Diagonal matrix of node degrees\\
&$\lapMat$&Unnormalized Laplacian matrix ($\degMat - \adj$)\\
&$\lapMat^+$&Pseudoinverse of $\lapMat$\\
&$\transMat$&Random walk transition matrix ($\degMat^{-1}\adj$)\\
&$\order$&Matrix power\\
\midrule
\multirow{6}{1.5cm}{\method functions}
&$\proximity{}$&Node proximity function\\
&$\nonlinearity{}$&Nonlinear transformation function\\
&$\embed{}$&Embedding function\\
&$\simMat$&Matrix of node proximities $\simMat = \proximity{\adj}$\\
&$\nonlinearSimMat$&Matrix of nonlinearly filtered node proximities $\nonlinearSimMat = \nonlinearity{\proximity{\adj}}$\\
&$\embeddingMatrix$&Matrix of node embeddings $\embeddingMatrix = \embed{\nonlinearity{\proximity{\adj}}}$\\
\midrule
\multirow{3}{*}{PPMI~\cite{netmf}} &vol($G$)&$\Sigma_{i,j}\adj_{ij}$\\
&T&Window size\\
&b&Parameter for negative sampling\\\midrule
\multirow{3}{1.5cm}{Heat kernel \\ \cite{graphwave,netlsd}}
&$g_\scale$&Filter kernel with scaling parameter $\scale$\\
&$\matLambda$&Diagonal matrix of eigenvalues of $\lapMat$\\
&$\matU$&Eigenvectors of $\lapMat$ ($\lapMat = \matU\matLambda \matU^T$)\\\midrule
\multirow{4}{*}{FaBP~\cite{fabp}}
&\multirow{2}{*}{$h_h$}&$\sqrt{(-c_1 + \sqrt{c_1^2 + 4c_2})/8c_2}$, \\
&& where $c_1=$trace$(\degMat) + 2$;  $c_2=$trace$(\degMat) - 1$ \\
&$a$&$4h_h^2/(1-4h_h^2)$\\
&$c$&$2h_h/(1-4h_h^2$)\\\midrule
\multirow{1}{*}{PPR~\cite{hope}}
&$\beta$&Decay parameter\\
\bottomrule
\end{tabular}
}
\label{tab:dfn}
\end{table}

\noindent \textbf{Structural vs Positional Embeddings}.  Structural node embedding should learn similar features for automorphically equivalent or near-equivalent nodes~\cite{jin2020understanding, equiv-prox-struc}, even if they are distant from each other in the network.  On the other hand, for nodes to have similar positional embeddings, they must be close in the network. Although these are two very different embedding outcomes, the steps we present below can generate either kind of embedding; later, we will show concretely where the difference arises.   

\subsection{Node Feature Learning}

For learning node features from a graph with adjacency matrix $\adj$, we perform the following three steps:

\begin{enumerate}[start=1,label={\bfseries Step~\arabic*:}]
    \item Calculate node proximities $\simMat$ using a function $\proximity{\adj}$; 
    \item Filter these proximities via a nonlinearity function $\nonlinearSimMat = \nonlinearity{\simMat}$; and 
    \item Embed the transformed proximities using a dimensionality reduction function: $\embeddingMatrix = \embed{\nonlinearSimMat}$. 
\end{enumerate}
Our node embedding framework can be precisely summarized by function composition:
\begin{equation}
\label{eq:node-embed}
    \embeddingMatrix = \embed{\nonlinearity{\proximity{\adj}}}
\end{equation}

\noindent \textbf{Multiscale Node Embeddings}.  Many proximity functions can be tuned with scaling parameters to capture more local or global proximity~\cite{graphwave,cao2015grarep}.  We can create multiscale embeddings by concatenating embeddings using the same node proximity function at several different scales:
\begin{equation}
\label{eq:node-embed-multiscale}
    \embeddingMatrix = ||_i \embeddingMatrix^{(\scale_i)} = \embeddingMatrix^{(\scale_1)} || \embeddingMatrix^{(\scale_2)} || \ldots || \embeddingMatrix^{(\scale_\nScales)},
\end{equation}
where embeddings at each individual scale are computed with Eq.~\eqref{eq:node-embed} using the desired scale parameter to compute node proximity: $\embeddingMatrix^{(\scale_i)} = \embed{\nonlinearity{\proximity{\adj; \scale_i}}}$.  %

\subsection{Graph Feature Learning}

We can aggregate a graph's node embeddings into a single feature vector that describes the entire graph using a function $\aggregate{}$: 
\begin{equation}
\label{eq:graphfeatures}
    \graphVec = \aggregate{\embeddingMatrix}
\end{equation}

\section{Unifying Node Embedding Methods}
\label{sec:node}
We now propose concrete function choices for Eqs.~\eqref{eq:node-embed}-\eqref{eq:graphfeatures}, and characterize general and specific choices.  

\subsection{Step 1: Computing Node Proximities $\proximity{}$.}
The first step of our framework, \method, is to create a matrix of pairwise node proximities $\simMat \in \mathbf{R}^{n \times n}$.  $\simMat_{ij}$ should be large for nodes that are close in the graph (e.g. neighbors) and small for faraway nodes.  Different proximity matrices have been used not only for node embedding but throughout graph mining, %
including: 
\begin{itemize*}
    \item Positive pointwise mutual information (\ppmi)~\cite{netmf}: \\
    $\simMat = \frac{\text{vol}(G)}{bT}(\sum^T_{r=1}\transMat^{r})\degMat^{-1}$.
    
    \item Heat kernel (\heat)~\cite{graphwave}:
    $\simMat = \matU g_s(\matLambda)\matU^\top$.
    
    \item Belief Propagation (\fabp)~\cite{fabp}:
    $\simMat = (\identMat + a\degMat - c\adj)^{-1}$.
    
    \item Personalized Pagerank (\ppr)~\cite{hope}: $\simMat = (\identMat - \beta\adj)^{-1}(\beta\adj)$.
    
    \item Laplacian pseudoinverse ($\invlap$)~\cite{infinitewalk}: 
    $\simMat = \lapMat^{+}$, which approximates the PPMI matrix as the window size $T \rightarrow \infty$, up to a low-rank correction term.
    
    \item Powers of the adjacency matrix (\adjpower)~\cite{hope,cao2015grarep} or random walk matrix (\rwpower)~\cite{retgk}: $\simMat = \adj^\order$ or $\simMat = \transMat^\order$.  %
\end{itemize*}

\subsection{Step 2: Nonlinear Transformations of Node Proximities $\nonlinearity{}$.}
As a preprocessing step before embedding, we can filter the node proximities with a nonlinear function $\nonlinearity{\simMat}$.  Recent work~\cite{infinitewalk} argues that such nonlinearity is largely responsible for the performance gain of recent deep learning-inspired node embedding methods.  Thus, we consider the following functions:
\begin{itemize*}
    \item No nonlinearity: $\nonlinearity{\simMat}=\simMat$ (\linear function). %
    \item Elementwise logarithm (\rampedLog): 
    For proximity-preserving embedding with \ppmi, we set 
    $\nonlinearity{\simMat}_{i,j} = \log(\max\{\simMat_{i,j},1\})$~\cite{netmf}.
    For other matrices with values concentrated in $[0, 1]$, we propose to keep more information by only filtering out negative or zero elements:
    \begin{displaymath}
    \nonlinearity{\simMat}_{i,j} = \left\{
    \begin{aligned}
    & 0 \ \ &, \simMat_{i,j} \leq 0\\
    & \log(\frac{\simMat_{i,j}}{\min(\simMat^+)}) &, \simMat_{i,j} > 0
    \end{aligned}
    \right.
    \end{displaymath}
    
    where $\min(\simMat^+)$ is the smallest positive element in $\simMat$. %
    \item Thresholded binarization (\bin{\threshold})~\cite{infinitewalk}: 
    Let $a\in\mathbf{N}$ be the $\threshold$-th percentile ($\threshold\%$ smallest element) in $\simMat$.  Then $\nonlinearity{\simMat}$ is defined elementwise as:
    \begin{displaymath}
    \nonlinearity{\simMat}_{i,j} = \left\{
    \begin{aligned}
    & 0 \ \ &, \simMat_{i,j} \leq a\\
    & 1 &, \simMat_{i,j} > a
    \end{aligned}
    \right.
    \end{displaymath}
\end{itemize*}

\subsection{Step 3: Embedding Node Proximities $\embed{}$.}
Given a (filtered) similarity matrix $\nonlinearSimMat$, node embeddings learn low-dimensional feature representations using various dimensionality reduction techniques.  We represent the embedding process as a function $\embed{\nonlinearSimMat}$.

\begin{itemize*}
        \item One way to generate $\embeddingDimension$-dimensional embeddings is by factorizing the node similarity matrix, prototypically with singular value decomposition (SVD)~\cite{netmf}.  Based on  rank-$\embeddingDimension$ SVD $\nonlinearSimMat \approx \matU_\embeddingDimension \matSigma_\embeddingDimension \matV_\embeddingDimension$, we can obtain the node embeddings as  %
$\embed{\nonlinearSimMat} = \matU_\embeddingDimension \matSigma_\embeddingDimension^{\frac{1}{2}}$.

    \item Another way to generate a $\embeddingDimension$-dimensional embeddings from an $\numberOfNodes \times \numberOfNodes$ similarity matrix $\nonlinearSimMat$ is characteristic function sampling (CFS).  
For even dimensionality $\embeddingDimension$, we compute the embedding of each node $u$ by sampling real and imaginary components from its empirical characteristic function, $\phi_u(t) = \sum_{v=1}^{\numberOfNodes}\exp(it\nonlinearSimMat_{vu})$, evaluated at $\frac{\embeddingDimension}{2}$ evenly spaced landmarks $t_1, \ldots, t_{\embeddingDimension/2}$ between 0 and 100~\cite{graphwave}. CFS is a permutation-invariant function applied row-wise to $\nonlinearSimMat$ that models the distribution of a node's proximity scores~\cite{graphwave}.
\end{itemize*}

\noindent \textbf{Special Cases.} \method generalizes several existing proximity-preserving and structural embedding methods, which we summarize in the following result:   
\begin{theorem}
\label{thm:special-cases}
Special cases of Eq.~\eqref{eq:node-embed-multiscale} include but are not limited to: GraphWave~\cite{graphwave}, NetMF~\cite{netmf}, InfiniteWalk~\cite{infinitewalk}, HOPE~\cite{hope}, GraRep~\cite{cao2015grarep},  DNGR~\cite{dngr}, and sRDE~\cite{heimann2020structural} for signed networks.
\end{theorem}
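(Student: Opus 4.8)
The plan is to prove this constructively: for each listed method I exhibit specific choices of the proximity function $\proximity{}$, the nonlinearity $\nonlinearity{}$, and the embedding function $\embed{}$---together with the scale set of Eq.~\eqref{eq:node-embed-multiscale}---whose composition reproduces that method's output. Because the framework is literally the function composition of Eq.~\eqref{eq:node-embed}, each reduction amounts to matching the three functions against the published definition of the target method, which is verifiable by direct inspection.

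I would organize the argument by grouping methods according to which components they share, which also makes transparent the qualitative point the theorem supports: related methods differ in only a single step. The matrix-factorization family comes first. NetMF is recovered by taking the proximity $\proximity{}$ to be \ppmi, the nonlinearity $\nonlinearity{}$ to be \rampedLog (the $\log\max\{\cdot,1\}$ variant), and the embedding $\embed{}$ to be SVD. InfiniteWalk swaps in the closely related proximity $\invlap$, which the text already notes approximates the PPMI matrix as $\window \to \infty$. HOPE uses \ppr as $\proximity{}$ with $\nonlinearity{}$ equal to \linear and SVD for $\embed{}$, so it differs from NetMF only in Steps~1 and~2. GraRep and GraphWave are the multiscale instances: GraRep concatenates SVD embeddings of log-transformed \rwpower proximities over powers $\order = 1, \ldots, K$, instantiating Eq.~\eqref{eq:node-embed-multiscale} with the scale index identified with the matrix power; GraphWave concatenates \heat-kernel proximities at several scaling parameters $\scale_i$, using \linear for $\nonlinearity{}$ and CFS for $\embed{}$.

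The two remaining methods require more care, and I expect them to be the main obstacle. DNGR computes a PPMI proximity but reduces dimensionality with a stacked denoising autoencoder rather than SVD or CFS; recovering it requires reading $\embed{}$ in its full generality as \emph{any} dimensionality-reducing map, which I would justify from the framing of Step~3, where SVD and CFS are presented as prototypical rather than exhaustive choices. Likewise, sRDE is defined on signed networks, so its proximity matrix carries negative entries; I would show it fits by supplying the appropriate signed proximity for $\proximity{}$ and then verifying that Steps~2 and~3 act exactly as in the unsigned case, so that only Step~1 requires the signed extension. In both cases the reduction is not a literal substitution into the unsigned, SVD/CFS template, so the real burden is to argue that the abstract steps genuinely subsume these concrete operations rather than merely resembling them; everything else reduces to bookkeeping of constants and scale indices.
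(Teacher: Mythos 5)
Your proposal is correct and follows essentially the same route as the paper: the published proof is precisely a per-method table of $(\proximity{}, \nonlinearity{}, \embed{})$ choices, with GraphWave and GraRep as the multiscale instances, DNGR handled by letting $\embed{}$ be a stacked denoising autoencoder, and sRDE by a signed random-walk proximity followed by a permutation-invariant histogram embedding. The only minor divergence is that the paper's construction attributes GraRep to powers of the adjacency matrix rather than the random-walk matrix, and fixes sRDE's $\embed{}$ concretely as a histogram, but neither affects the argument.
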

\begin{proof}
We give the constructions in App.~\ref{app:cases}.
\end{proof}

\subsection{What Makes Node Embeddings Positional or Structural?}
\label{sec:theory}
We isolate the embedding function $\embed{}$ as the responsible design choice for making \method yield positional or structural embeddings. 
Concretely, embedding a proximity matrix using SVD produces positional embeddings, while using CFS (or any other permutation-invariant row function) produces structural embeddings.  
On the other hand, any choice of $\proximity{}$ and $\nonlinearity{}$ can yield positional or structural embeddings. 

\begin{theorem}
\label{thm:embfunc}
Let connected graphs $\graph_1, \graph_2$ have an isomorphism $\pi: \vertexSet_1 \rightarrow \vertexSet_2$, i.e. a bijective mapping between the nodes and $\adj_2 = \perm \adj_1 \perm^\top $, where the binary matrix $\perm$ has nonzero elements exactly at the entries $(\pi(i), i)$ for $i \in [1, \ldots, |\vertexSet|]$.  Define a combined graph $\graph$ with block diagonal adjacency matrix $\adj = [\adj_1, \matZero; \matZero, \adj_2]$, so that $\pi$ encodes an automorphism within $\graph$.  Assume that node proximity and nonlinearity functions $\proximity{}$ and $\nonlinearity{}$ preserve this automorphism: $\nonlinearSimMat_2 = \perm \nonlinearSimMat_1 \perm^\top$, where $\nonlinearSimMat_i = \nonlinearity{\proximity{\adj_i}}$.  Also assume that disconnected nodes have proximity score $0$ (unchanged by nonlinearity), so that $\nonlinearSimMat = \nonlinearity{\proximity{\adj}} = 
[\nonlinearSimMat_1, \matZero; \matZero, \nonlinearSimMat_2]$. Let $\embeddingMatrix$ be the combined embeddings of $\graph$, which can be split into embeddings $\embeddingMatrix^{(1)}$ and $\embeddingMatrix^{(2)}$ corresponding respectively to the nodes originally in $\graph_1$ and $\graph_2$.  Then: 
\begin{enumerate}
    \item If $\embeddingMatrix = \text{SVD}(\nonlinearSimMat)$, then $\embeddingMatrix^{(1)}_i \neq \embeddingMatrix^{(2)}_{\pi(i)}$.
    \item If $\embeddingMatrix = \text{CFS}(\nonlinearSimMat)$, or more generally any permutation-invariant function $\embed{\nonlinearSimMat}$, then $\embeddingMatrix^{(1)}_i = \embeddingMatrix^{(2)}_{\pi(i)}$.
\end{enumerate}

\end{theorem}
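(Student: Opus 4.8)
The plan is to prove the two parts separately, since they hinge on opposite properties of the embedding map: the second claim exploits that CFS discards the \emph{ordering} of a node's proximity scores, while the first exploits that SVD does not. Throughout I would use the block-diagonal hypothesis $\nonlinearSimMat = [\nonlinearSimMat_1, \matZero; \matZero, \nonlinearSimMat_2]$ together with the automorphism hypothesis $\nonlinearSimMat_2 = \perm\nonlinearSimMat_1\perm^\top$.

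For the second (structural) claim, I would first observe that, because the off-diagonal blocks vanish, the column of a node $i$ originally in $\graph_1$ consists of the $i$-th column of $\nonlinearSimMat_1$ together with exactly $\numberOfNodes$ zeros from the empty block, and symmetrically the column of $\pi(i)$ consists of the $\pi(i)$-th column of $\nonlinearSimMat_2$ padded with $\numberOfNodes$ zeros. The automorphism hypothesis gives $(\nonlinearSimMat_2)_{\pi(a),\pi(i)} = (\nonlinearSimMat_1)_{a,i}$, and as $a$ ranges over $\vertexSet_1$ the index $\pi(a)$ ranges over all of $\vertexSet_2$; hence the two columns carry the \emph{same multiset} of entries (the same nonzeros and the same number of padding zeros). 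Since $\phi_u(t) = \sum_v \exp(it\,\nonlinearSimMat_{vu})$ depends on node $u$'s column only through this multiset, one gets $\phi_i \equiv \phi_{\pi(i)}$, and more generally any row-wise permutation-invariant $\embed{}$ returns identical values; sampling at the shared landmarks then yields $\embeddingMatrix^{(1)}_i = \embeddingMatrix^{(2)}_{\pi(i)}$. This part should go through cleanly and hold for every admissible choice.

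For the first (positional) claim, the key structural fact is that a block-diagonal matrix admits an SVD with block-structured singular vectors: writing $\nonlinearSimMat_1 = \matU_1\matSigma_1\matV_1^\top$, the columns of $\matU_1$ padded with zeros on the second block, and their $\perm$-images padded on the first block, are left singular vectors of $\nonlinearSimMat$ with the same singular values. I would use this to show the Gram matrix $\embeddingMatrix\embeddingMatrix^\top = \matU_\embeddingDimension\matSigma_\embeddingDimension\matU_\embeddingDimension^\top$ is itself block diagonal. Consequently the cross inner product $\langle \embeddingMatrix^{(1)}_i, \embeddingMatrix^{(2)}_{\pi(i)}\rangle$ lies in the vanishing off-diagonal block and equals $0$, whereas the two diagonal blocks are conjugate under $\perm$ and therefore force $\|\embeddingMatrix^{(1)}_i\| = \|\embeddingMatrix^{(2)}_{\pi(i)}\|$. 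If the two embeddings were equal, their inner product would equal this common squared norm, so equality would force both to be the zero vector; excluding that degenerate case gives $\embeddingMatrix^{(1)}_i \neq \embeddingMatrix^{(2)}_{\pi(i)}$.

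The main obstacle is that $\nonlinearSimMat$ repeats every singular value of $\nonlinearSimMat_1$ once per block, so its SVD is genuinely non-unique, and a rotation that mixes the two blocks would populate the off-diagonal of $\embeddingMatrix\embeddingMatrix^\top$ and break the argument. I would handle this by arguing that one may always select the block-respecting basis produced by the construction above, and by explicitly excluding the two situations the argument cannot cover: when the truncation rank $\embeddingDimension$ splits a repeated singular value, so that the top-$\embeddingDimension$ subspace is itself ambiguous, and when a node's projected proximity vector is null so that its embedding vanishes. Making precise that the positional conclusion holds for the canonical block-structured SVD, rather than for every admissible rotation, is the delicate point, in contrast to the structural claim which holds unconditionally.
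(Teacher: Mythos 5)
Your proposal is correct, and Part~2 is in substance the paper's own argument: the paper shows entrywise that $\nonlinearSimMat_{ij} = \nonlinearSimMat_{\pi(i)+n,\,\pi(j)+n}$ (via $\stdBasis{i}\nonlinearSimMat_1\stdBasis{j}^\top = \stdBasis{\pi(i)}\nonlinearSimMat_2\stdBasis{\pi(j)}^\top$) and concludes the two rows agree as multisets, exactly your ``same nonzeros plus the same padding zeros'' observation. For Part~1 you use the same key fact as the paper --- the block-diagonal structure of $\nonlinearSimMat$ lets one choose singular vectors supported on a single block --- but you finish differently. The paper passes to the eigendecomposition of $\nonlinearSimMat\nonlinearSimMat^\top$ and reads off that along each embedding dimension the nodes of one graph get value zero, hence $\embeddingMatrix^{(1)}_i \neq \embeddingMatrix^{(2)}_{\pi(i)}$; you instead show the Gram matrix $\embeddingMatrix\embeddingMatrix^\top = \matU_\embeddingDimension\matSigma_\embeddingDimension\matU_\embeddingDimension^\top$ is block diagonal with $\perm$-conjugate blocks, so the two embeddings are orthogonal yet of equal norm and can coincide only if both vanish. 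Your packaging buys something concrete: since $\matSigma_\embeddingDimension$ acts as a scalar on each repeated-singular-value eigenspace, $\matU_\embeddingDimension\matSigma_\embeddingDimension\matU_\embeddingDimension^\top$ is invariant under the rotational ambiguity of the SVD (provided the truncation rank does not split a repeated singular value), so your worry that a block-mixing rotation ``populates the off-diagonal'' is actually unfounded and your conclusion holds for \emph{every} admissible rank-$\embeddingDimension$ SVD, not just the canonical block-respecting one. You are also right to flag the two genuine edge cases --- a repeated singular value straddling the cutoff $\embeddingDimension$, and a node whose embedding is the zero vector --- which afflict the paper's argument equally; the paper only waves at them with the remark that this is ``an extreme case for a highly contrived example,'' whereas stating them as explicit hypotheses, as you propose, is the more rigorous resolution.
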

\begin{proof}
See supplementary App.~\ref{app:embfunc}.
\end{proof}

\noindent \textbf{Note:} Some existing methods learn structural embeddings with implicit or explicit matrix factorization~\cite{struc2vec,xnetmf}, which in \method would produce positional embeddings.  The key difference is that these methods do not factorize a pairwise node proximity matrix, but a \emph{structural} similarity matrix (where disconnected nodes may have a nonzero similarity score).  One advantage of \method is that the node proximity matrices we use are well studied throughout graph mining.  %

\section{Unifying Graph Embedding Methods}
\label{sec:graph}
Our \method framework also produces features that describe an entire graph, when we aggregate its nodes' embeddings into a single feature vector.  %
Here, we show that two recent graph kernels and feature maps are in essence special cases of \method.  

\vspace{0.1cm}
\noindent \textbf{\method:NetLSD}.  NetLSD computes graph features from its heat kernel matrix at multiple scales~\cite{netlsd}.  For scales $\scale_1, \ldots, \scale_\embeddingDimension$, the resulting $\embeddingDimension$-dimensional feature vector has as its $i$-th entry $h^{(\scale_i)}$, the trace of the heat kernel matrix at scale $\scale_i$.  %
For size invariance, the authors propose normalizing an $n$-node graph's features by the heat trace of the $n$-node empty graph, which amounts to multiplying by $\frac{1}{n}$. %
Thus, the exact normalized NetLSD features are: 
$\frac{1}{n} [h^{(\scale_1)}, \ldots, h^{(\scale_\embeddingDimension)}]$.

\begin{theorem}
NetLSD (using the heat kernel with empty graph normalization) is a special case of Eq.~\eqref{eq:graphfeatures} where $\proximity{}$ computes the graph's heat kernel matrix at multiple scales $\scale$ as its proximity matrix $\simMat$, $\embed{\simMat} = \text{diag}(\simMat)$, $\nonlinearity{}$ is the identity function, and $\aggregate{}$ averages the embeddings.  %
\end{theorem}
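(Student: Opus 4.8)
The plan is to instantiate the three node-level steps of \method with the stated choices, form the multiscale node embedding via Eq.~\eqref{eq:node-embed-multiscale}, aggregate with $\aggregate{}$, and verify that the resulting graph feature vector coincides entry-for-entry with the normalized NetLSD vector $\frac{1}{\numberOfNodes}[h^{(\scale_1)},\ldots,h^{(\scale_\embeddingDimension)}]$. The whole argument is a direct unrolling of the composition $\aggregate{\embed{\nonlinearity{\proximity{\adj}}}}$ under the prescribed function choices.

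First I would fix a single scale $\scale_i$ and trace one node's embedding through the pipeline. Step~1 sets $\simMat^{(\scale_i)} = \proximity{\adj; \scale_i} = \matU g_{\scale_i}(\matLambda)\matU^\top$, the heat kernel matrix at scale $\scale_i$. Step~2 with $\nonlinearity{}$ the identity leaves it unchanged, so $\nonlinearSimMat^{(\scale_i)} = \simMat^{(\scale_i)}$. Step~3 with $\embed{\nonlinearSimMat} = \text{diag}(\nonlinearSimMat)$ returns the $\numberOfNodes$-vector of diagonal entries, so node $u$ receives the one-dimensional embedding $(\simMat^{(\scale_i)})_{uu}$, namely its heat-kernel self-similarity at scale $\scale_i$. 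Concatenating across the $\embeddingDimension$ scales as in Eq.~\eqref{eq:node-embed-multiscale} equips each node with the $\embeddingDimension$-dimensional vector $[(\simMat^{(\scale_1)})_{uu},\ldots,(\simMat^{(\scale_\embeddingDimension)})_{uu}]$, so $\embeddingMatrix$ is the $\numberOfNodes\times\embeddingDimension$ matrix of these rows.

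Next I would apply the aggregation $\aggregate{}$, which averages the node embeddings. The $i$-th coordinate of $\aggregate{\embeddingMatrix}$ is then $\frac{1}{\numberOfNodes}\sum_{u=1}^{\numberOfNodes}(\simMat^{(\scale_i)})_{uu} = \frac{1}{\numberOfNodes}\,\text{trace}(\simMat^{(\scale_i)})$. The key identity is simply that the sum of a matrix's diagonal entries equals its trace, so averaging the $\text{diag}$ embedding over nodes reproduces $\frac{1}{\numberOfNodes}$ times the heat trace. Since $\text{trace}(\simMat^{(\scale_i)}) = h^{(\scale_i)}$ by the definition of the NetLSD feature, the aggregated vector is exactly $\frac{1}{\numberOfNodes}[h^{(\scale_1)},\ldots,h^{(\scale_\embeddingDimension)}]$.

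The last thing to check — and the only point requiring any care — is that this $\frac{1}{\numberOfNodes}$ factor is precisely NetLSD's empty-graph normalization and not merely an artifact of choosing to average. I would verify that the $\numberOfNodes$-node empty graph has Laplacian $\lapMat = \matZero$, hence all eigenvalues $0$ and heat trace $\sum_j e^{-\scale\cdot 0} = \numberOfNodes$ at every scale; dividing by this heat trace is exactly multiplication by $\frac{1}{\numberOfNodes}$, matching the remark in the excerpt. Thus averaging in $\aggregate{}$ is not an approximation but reproduces the exact normalized NetLSD features, completing the identification. There is no genuine obstacle here: the sole subtlety is the trace-versus-average bookkeeping and confirming its alignment with the normalization constant.
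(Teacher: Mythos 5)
Your proposal is correct and follows essentially the same route as the paper's own proof: unroll the pipeline at each scale to get the one-dimensional diagonal embedding $\simMat^{(\scale_k)}_{ii}$, concatenate across scales, and observe that averaging turns the sum of diagonal entries into $\frac{1}{\numberOfNodes}\mathrm{Tr}(\simMat^{(\scale_k)}) = \frac{1}{\numberOfNodes}h^{(\scale_k)}$. Your additional check that the empty graph's heat trace equals $\numberOfNodes$ (so that the averaging factor is exactly NetLSD's normalization) is a worthwhile detail that the paper states in the surrounding text rather than inside the proof itself.
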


\begin{proof}
At scale $\scale_k$, the one-dimensional node embedding of node $i$ is given by $\embeddingVector_i^{(\scale_k)} = \simMat^{(\scale_k)}_{ii}$.  Thus, for $\embeddingDimension$ scales $\scale_1, \ldots, \scale_\embeddingDimension$, the multiscale embedding of node $i$ given by Eq.~\eqref{eq:node-embed-multiscale} is $\embeddingVector_i = [\simMat^{(\scale_1)}_{ii}, \ldots, \simMat^{(\scale_\embeddingDimension)}_{ii}]$.  Aggregating these node features into graph features using Eq.~\eqref{eq:graphfeatures} gives $\graphVec = \frac{1}{n}\sum_i \embeddingVector_i = \frac{1}{n}[\sum_i \simMat^{(\scale_1)}_{ii}, \ldots, \sum_i \simMat^{(\scale_\embeddingDimension)}_{ii}] = \frac{1}{n}[\text{Tr}(\simMat^{(\scale_1)}), \ldots, \text{Tr}(\simMat^{(\scale_\embeddingDimension)})]$.  When $\simMat$ is the heat kernel matrix, each term becomes $\text{Tr}(\simMat^{(\scale_i)}) = h^{(\scale_i)}$.%
\end{proof}

\noindent \textbf{\method:RetGK}.  The scalable graph kernel ($\text{RetGK}_\text{II}$)~\cite{retgk} based on approximate feature maps~\cite{rahimi2008random} is defined as 
$K(\graph_1, \graph_2) = \mathbf{\kappa} \Big(\overline{\graphVec}(\graph_1), \overline{\graphVec}(\graph_2)\Big)$.  
Without node attributes, $\overline{\graphVec}(G) = \sum_{i=1}^{n} \phi(\embeddingVector_i)$ where the $j$-th entry of $\embeddingVector_i$ is the return probability of a random walk of length $j$ starting from node $i$ (formally $\transMat^j_{ii}$), and $\phi$ is a feature map approximating a vector-valued kernel~\cite{rahimi2008random}. It can thus be seen that RetGK has essentially the same form as the other methods:

\begin{theorem}
Without node attributes and with $\phi$ and $\kappa$ both set to the linear kernel, RetGK is a special case of Eq.~\eqref{eq:graphfeatures} where: for multiple values of the parameter $\scale$, $\proximity{}$ computes the graph's $\scale$-step random walk transition matrix as its proximity matrix $\simMat$, $\embed{\simMat} = \text{diag}(\simMat)$, $\nonlinearity{}$ is the identity function, and $\aggregate{}$ averages the embeddings.  %
\end{theorem}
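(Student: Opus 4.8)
The plan is to mirror the proof of the preceding NetLSD theorem, substituting the random walk transition matrix for the heat kernel and tracking where RetGK's per-node feature map $\phi$ and the outer kernel $\kappa$ enter. First I would instantiate the three node-level steps of \method under the stated choices: with $\proximity{}$ returning the $\scale$-step random walk transition matrix, we have $\simMat^{(\scale_k)} = \transMat^{\scale_k}$ at each scale; since $\nonlinearity{}$ is the identity, $\nonlinearSimMat^{(\scale_k)} = \simMat^{(\scale_k)}$; and since $\embed{\simMat} = \text{diag}(\simMat)$ extracts the diagonal, the one-dimensional embedding of node $i$ at scale $\scale_k$ is $\embeddingVector_i^{(\scale_k)} = \transMat^{\scale_k}_{ii}$, which is precisely the return probability of a length-$\scale_k$ random walk starting at $i$.

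Next I would assemble the multiscale node embedding via Eq.~\eqref{eq:node-embed-multiscale}: taking the scales $\scale_1, \ldots, \scale_\embeddingDimension$ equal to walk lengths $1, \ldots, \embeddingDimension$ yields $\embeddingVector_i = [\transMat^{1}_{ii}, \ldots, \transMat^{\embeddingDimension}_{ii}]$, whose $j$-th entry is $\transMat^{j}_{ii}$. This coincides exactly with the per-node return-probability vector $\embeddingVector_i$ that defines RetGK. With $\phi$ set to the linear kernel, the associated feature map is the identity, so $\phi(\embeddingVector_i) = \embeddingVector_i$ and RetGK's graph embedding collapses to $\overline{\graphVec}(G) = \sum_i \embeddingVector_i$.

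Finally I would aggregate on the \method side using Eq.~\eqref{eq:graphfeatures}: averaging the node embeddings gives $\graphVec = \frac{1}{\numberOfNodes}\sum_i \embeddingVector_i = \frac{1}{\numberOfNodes}\overline{\graphVec}(G)$, so the two graph feature vectors agree up to the $\frac{1}{\numberOfNodes}$ size-normalization factor. Because $\kappa$ is the linear kernel, the induced graph comparison is then the inner product of these feature vectors, recovering RetGK's form.

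The main obstacle is bookkeeping of the two kernels rather than any deep argument. I must state carefully that the "linear kernel" choice for $\phi$ means the identity feature map, so that the per-node transformation vanishes and the node embedding is literally the return-probability vector; and I must reconcile RetGK's unnormalized sum with \method's averaging. The latter is only a constant $\frac{1}{\numberOfNodes}$ rescaling, which does not alter the essential form captured by Eq.~\eqref{eq:graphfeatures}; if an exact match were required, one would instead let $\aggregate{}$ be the sum rather than the average.
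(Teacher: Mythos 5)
Your proposal is correct and matches the argument the paper intends: the paper omits an explicit proof for this theorem, treating it as immediate from the preceding NetLSD proof, and your write-up is exactly that NetLSD argument with $\transMat^{\scale}$ in place of the heat kernel plus the bookkeeping for $\phi$ and $\kappa$. Your explicit note that RetGK's unnormalized sum differs from \method's average by a constant $\frac{1}{\numberOfNodes}$ factor is a point the paper glosses over, and handling it as you do (or switching $\aggregate{}$ to a sum) is the right call.
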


In practice, ~\cite{retgk} proposes to set $\phi$ to be a random Fourier feature map to approximate the Gaussian kernel~\cite{rahimi2008random}, and $\kappa$ to be a Gaussian or Laplace kernel, applying the successive embedding trick used for graph kernels~\cite{nikolentzos2018enhancing}.  Node attributes may be incorporated by taking the Kronecker product of the attribute vectors with the embeddings~\cite{retgk}. All of these techniques readily apply to any of the other methods we have proposed.

\vspace{0.1cm}
\noindent \textbf{Expressive Graph Comparison with \method.} Postprocessing aside, we can interpret RetGK and NetLSD as instances of \method: they average multiscale embeddings learned from different node proximity matrices (\heat for NetLSD, \rwpower for RetGK).  However, they use a 1-dimensional embedding function mapping nodes to their corresponding diagonal elements in $\simMat$. Of course, this simple embedding loses off-diagonal information in $\simMat$ (namely, inter-node proximities), which our embeddings capture.  To show the greater expressivity of our embeddings $\embeddingMatrix$ by a fair comparison, we also use mean pooling for our $\aggregate{\embeddingMatrix}$, 
although more complex aggregation functions could be used~\cite{rgm}. %
 
\section{Experiments}
\label{sec:exp}
To extensively evaluate \method in a variety of contexts, we consider several real datasets  for node classification (Tab.~\ref{tab:real-node}) for which positional and structural role-based embeddings have been shown to be most effective (\S~\ref{sec:nodelevel}).  For the latter, we also use synthetic data exhibiting clear role equivalences, the structure of which we can precisely control%
~\cite{graphwave,jin2020understanding}.  We also evaluate aggregated structural embeddings for graph classification (\S~\ref{sec:graphlevel}) on real benchmark datasets
(Tab.~\ref{tab:real-graph}).  

\begin{table}[t!]
\centering
\caption{Real Datasets}
\label{tab:real-data}
\vspace{-0.2cm}
\begin{subtable}{1.0\columnwidth}
\caption{Node Classification}
{\footnotesize 
\resizebox{\columnwidth}{!}
 {
    \begin{tabular}{llrrl}
    \toprule
         & \textbf{Dataset} & \textbf{\# Nodes} & \textbf{\# Edges} & \textbf{Labels} \\ \midrule
        
        \multirow{3}{*}{\rotatebox[origin=c]{90}{Proxim.}} & \textbf{BlogCatalog}~\cite{netmf}  & 10,312 & 333,983 & Blogger Interests (39) \\
        & \textbf{PPI}~\cite{netmf} & 3,890 & 76,584 & Biological states (50) \\
        & \textbf{Wikipedia}~\cite{netmf} & 4,777 &184,812 & Part-of-Speech tags (40) \\
        \midrule
        \multirow{3}{*}{\rotatebox[origin=c]{90}{Struct.}} & \textbf{Brazil}~\cite{struc2vec} &  131 & 1,038 & \# landings \& take-off (4) \\ %
        & \textbf{Europe}~\cite{struc2vec} &  399 & 5,995 & \# landings \& take-off (4) \\
        & \textbf{USA}~\cite{struc2vec}  & 1,190 & 13,599 & \# passengers (4) \\
        \bottomrule
        
    \end{tabular}
    }
}
\vspace{-0.05cm}
\label{tab:real-node}
\end{subtable}
\vspace{0.1cm}

\begin{subtable}{1.0\columnwidth}
\caption{Graph Classification}
{\footnotesize 
\resizebox{\columnwidth}{!}
 {
    \begin{tabular}{lrrl}
    \toprule
        \textbf{Dataset} & \textbf{\# Graphs} & \textbf{Avg \# Nodes} & \textbf{Labels} \\ \midrule
        \textbf{IMDB-M}~\cite{tudataset} & 1,500 & 13.00 & Collaboration genre (3)  \\
        \textbf{PROTEINS}~\cite{tudataset} &1,113 & 39.06 & Protein type (2)  \\
        \textbf{PTC-MR}~\cite{tudataset} & 344 & 14.29 & Molecular property (2)  \\
        \bottomrule
    \end{tabular}
    }
}
\vspace{-0.05cm}
\label{tab:real-graph}
\end{subtable}

\end{table} %

\subsection{Node-level Embedding.}
\label{sec:nodelevel}
First we evaluate \method in the node classification task with positional and structural node embeddings.

\vspace{0.05cm}
\noindent \textbf{Setup.}
We combine 7 node proximity functions $\proximity{}$ and 5 different nonlinearities $\nonlinearity{}$ (including \linear).  Following our theoretical analysis (\S\ref{sec:theory}), we use SVD to generate positional node embeddings and CFS to generate structural embeddings.  In total, the \method framework gives us \textbf{35 different node embedding methods} of each type, including positional embeddings NetMF~\cite{netmf}, InfiniteWalk~\cite{infinitewalk}, and HOPE~\cite{hope} and structural embedding method GraphWave~\cite{graphwave} as special cases.  We tune hyperparameters with grid search and report the procedure and best parameters in App.~\ref{app:prox-settings}. Interestingly, we find that the best parameters strongly model local node proximity.  

We follow the supervised machine learning setup of \cite{struc2vec}: we randomly sample 80$\%$ of the dataset for training and the rest for testing. For multi-label prediction, we use the one-vs-rest logistic regression model \cite{netmf} and evaluate using micro-F1 scores. %

\subsubsection{Positional Node Embedding.} 
\label{sec:exp-node-prox}

\begin{figure*}[t]
    \centering
    \subfloat{%
      \includegraphics[width=.4\textwidth]{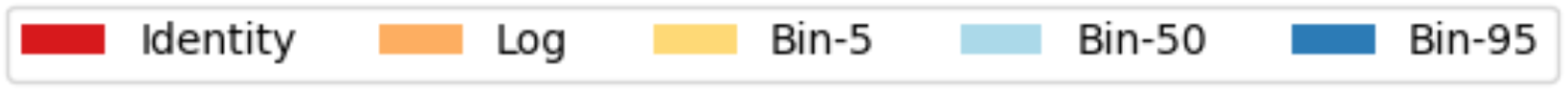}
    }
	\vspace{-0.1cm}
    
    \includegraphics[width=.3\linewidth]{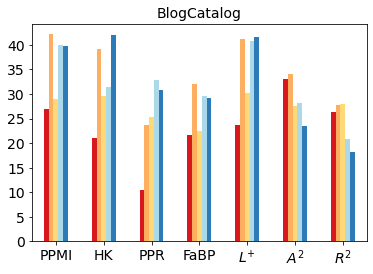}
    \includegraphics[width=.3\linewidth]{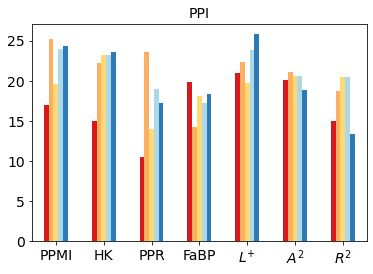}
    \includegraphics[width=.3\linewidth]{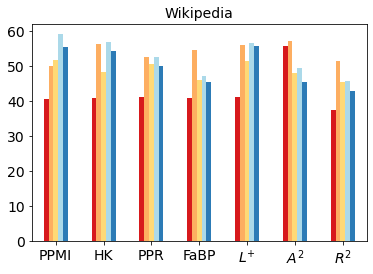}
    \caption{Node classification performance (micro-F1 scores) with positional embedding.  Nonlinearity generally helps, but the best nonlinearity function varies across proximity matrices, and the best proximity matrix varies across datasets.} %
    \label{fig:prox-classification}
\end{figure*}

We report raw results for all 35 positional node embedding methods derived from \method in Fig.~\ref{fig:prox-classification}.  Table~\ref{tab:rank-prox} performs a drilldown on a per-design choice basis.  

\vspace{0.1cm}
\noindent \textbf{Results.} 
We can see that \ppmi does an excellent job, while $\invlap$ is also competitive. As for the nonlinearity $\nonlinearity{}$, our findings support recent work~\cite{infinitewalk} that adding nonlinearity is a critical part of outperforming the original spectral embedding approaches: it is almost always beneficial for all proximity matrices.  On average, we find that \rampedLog does the best; however, \bin{\threshold} also performs better than \linear (no nonlinearity), and indeed the best embedding method for two of the three datasets (PPI and Wikipedia) uses binarization.  

\begin{table*}[t!]
\centering
\caption{Average rank and average/max micro-F1 scores of different proximity $\proximity{}$ and nonlinearity functions $\nonlinearity{}$ on all datasets used for positional node embedding.  Design choices used in existing methods \colorbox{NetMF}{NetMF} and \colorbox{InfWalk}{InfiniteWalk} perform well on average (better than \colorbox{HOPE}{HOPE}, which uses various $\proximity{}$ functions but no nonlinearity).  However, new design combinations are competitive. %
}
\label{tab:rank-prox}
\resizebox{.98\textwidth}{!}{
\begin{tabular}{@{}ll ccc c ccc c ccc@{}}
\toprule
& & \multicolumn{3}{c}{\textbf{BlogCatalog}} && \multicolumn{3}{c}{\textbf{PPI}} && \multicolumn{3}{c}{\textbf{Wikipedia}} \\ \cline{3-5} \cline{7-9} \cline{11-13}
& & \textbf{Avg Rank} & \textbf{Avg Acc} & \textbf{Max Acc} && \textbf{Avg Rank} & \textbf{Avg Acc} & \textbf{Max Acc} && \textbf{Avg Rank} & \textbf{Avg Acc} & \textbf{Max Acc}\\  \cmidrule{1-2} \cmidrule{3-5} \cmidrule{7-9} \cmidrule{11-13}
\multirow{7}{*}{$\proximity{}$} & \colorbox{NetMF}{\ppmi} & 10.4 & 35.56 & \textbf{42.21} &&  10.8 & 22.03 &25.25 && 14.2 & 51.41 & \textbf{59.10} \\
& \heat & 13.2 & 32.66 & 41.99 && 12.0 & 21.44 & 23.61 && 13.4 & 51.33 &56.89\\
& \colorbox{HOPE}{\ppr} & 21.6 & 24.61 & 32.80 && 23.8 & 16.84 &23.63 && 17.2 & 49.38 &52.69\\
& \fabp & 20.6 & 26.94 &31.98 && 24.8 & 17.57 &19.89 && 22.2 & 46.79 & 54.43\\
& \colorbox{InfWalk}{$\invlap$} & 10.0 & 35.49 & 41.55 && 8.8 & 22.52 &\textbf{25.80} && 11.8 & 52.11 & 56.47\\
& \colorbox{HOPE}{$\adj^2$} & 17.2 & 29.25&34.06 && 15.2 & 20.21&21.04 && 14.6 & 51.13&57.01\\
& $\transMat^2$ & 26.0 & 24.25&28.04 && 23.0 & 17.61&20.48 && 25.4 & 44.57&51.52\\ \cmidrule{1-2} \cmidrule{3-5} \cmidrule{7-9} \cmidrule{11-13} %
\multirow{5}{*}{$\nonlinearity{}$} & \colorbox{HOPE}{\linear} & 25.43 & 23.31 & 33.04 &&  24.0 & 16.9&20.93 && 27.86 & 42.53 &55.82\\
& \colorbox{NetMF}{\rampedLog} & 10.86 & 34.30& \textbf{42.21} && 12.86 & 21.07&25.25 && 8.86 & 53.97&57.01\\
& \colorbox{InfWalk}{\bin{5}}& 20.43 & 27.44 &30.14 && 18.71 & 19.39 &23.23 && 19.14 & 48.76&51.77\\
& \colorbox{InfWalk}{\bin{50}}& 14.0 & 31.95 &40.85 && 12.71 & 21.16 &23.97 && 11.57 & 52.53&\textbf{59.10}\\
& \colorbox{InfWalk}{\bin{95}}& 14.29 & 32.11&41.99 && 16.29 & 20.21&\textbf{25.80} && 17.43 & 49.86&55.63 \\
\bottomrule
\end{tabular}
}
\end{table*}

The use of binarization as nonlinearity and $\invlap$ for proximity was proposed by InfiniteWalk~\cite{infinitewalk}, and the use of \ppmi node proximities with \rampedLog nonlinearity is the NetMF method~\cite{netmf}.  Our findings confirm that these recently identified design choices are indeed among the most successful overall.  However, new design choices are competitive with them and may warrant further exploration.  Moreover, no single choice of nonlinearity function $\nonlinearity{}$ performs best, nor does performance vary monotonically with the sparsity of the resulting matrix (\bin{50} performs better than both \bin{5} and \bin{95}).  Corroborating ~\cite{infinitewalk}, deeper characterization of various choices of $\nonlinearity{}$ and their effects is of continued interest. 
\begin{observation}
\begin{enumerate*}
    \item[(1)] Nonlinearity has a complex effect, but is essential in improving the performance of positional node embedding.
    \item[(2)] Generally, design choices identified by recent works~\cite{netmf,infinitewalk} are among the most successful across datasets, but new combinations are often competitive.  
\end{enumerate*}
\end{observation}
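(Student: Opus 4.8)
The plan is to establish both parts of the observation empirically, by analyzing the node-classification results in Fig.~\ref{fig:prox-classification} and the per-design-choice drilldown in Tab.~\ref{tab:rank-prox}, rather than by a formal derivation. The experimental substrate is fixed by the setup: the $7$ proximity functions $\proximity{}$ and $5$ nonlinearities $\nonlinearity{}$ (including \linear) yield $35$ positional embedding methods, each evaluated on BlogCatalog, PPI, and Wikipedia with micro-F1, using SVD as the embedding function per \S\ref{sec:theory}. For each claim I would treat average rank and average/max F1 across datasets as the summary statistics, so that the qualitative words ``essential,'' ``complex,'' and ``competitive'' are each pinned to a concrete comparison in the table.

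For part~(1), the ``essential'' half, first I would fix each proximity $\proximity{}$ and compare its \linear variant against its best nonlinear variant, exhibiting a gain in every case; then I would aggregate over proximities by comparing the marginal average rank of \linear (roughly $24$--$28$ across the three datasets, the worst of all five nonlinearities) against every other $\nonlinearity{}$, all of which rank substantially better. The ``complex'' half requires the complementary fact that no single $\nonlinearity{}$ dominates, which I would support by two observations drawn directly from Tab.~\ref{tab:rank-prox}: the best nonlinearity differs across proximity matrices and across datasets, and performance is non-monotone in sparsity, since \bin{50} outranks both \bin{5} and \bin{95} on all three datasets. The latter shows the effect is not merely a function of how aggressively the matrix is thresholded.

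For part~(2), I would isolate the specific $(\proximity{},\nonlinearity{})$ pairs instantiated by existing methods---NetMF as \ppmi with \rampedLog, and InfiniteWalk as $\invlap$ with binarization \bin{\threshold}---and verify that their constituent design choices attain top marginal ranks: \ppmi and $\invlap$ are the two best proximities, and \rampedLog is the best nonlinearity on average. To establish that new combinations nonetheless stay competitive, I would exhibit existence counterexamples to exclusivity, namely that binarization (rather than the canonical NetMF \rampedLog) yields the method attaining the maximum F1 on two of the three datasets (PPI and Wikipedia), so a design choice outside the established recipes is simultaneously best-in-class there.

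The main obstacle is that ``essential'' and ``competitive'' are inherently qualitative, and the proximity and nonlinearity axes interact, so neither factor can be cleanly isolated; the honest conclusion is a marginal statement (averaging over the other axis) together with existence statements (best-per-dataset), and I must guard against the averaged ranks being driven by a few proximity matrices on which nonlinearity happens to help little. I would therefore report both the marginal averages \emph{and} the full $35$-method grid of Fig.~\ref{fig:prox-classification}, so that the interaction effects remain visible rather than being concealed behind the summary statistics, and so the reader can confirm that the ranking conclusions are not an artifact of the aggregation.
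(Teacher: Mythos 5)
Your proposal matches the paper's own support for this observation: the paper likewise argues empirically from the full 35-method grid (Fig.~\ref{fig:prox-classification}) and the marginal drilldown (Tab.~\ref{tab:rank-prox}), noting that \linear is uniformly the worst-ranked nonlinearity, that \rampedLog is best on average while \bin{50} beats both \bin{5} and \bin{95} (non-monotonicity in sparsity), and that the NetMF (\ppmi{}+\rampedLog) and InfiniteWalk ($\invlap$+\bin{\threshold}) choices rank among the top yet are matched by new combinations, with binarization attaining the maximum micro-F1 on PPI and Wikipedia. Your added caution about interaction effects between $\proximity{}$ and $\nonlinearity{}$ is sensible but does not change the route; the argument is essentially identical to the paper's.
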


\subsubsection{Structural Node Embedding.}
\label{sec:exp-node-struc}
We now evaluate the 35 methods we obtain from the \method framework for structural role-based node embedding in two major tasks, node classification and clustering.

\begin{table*}[t!]
\centering
\caption{\textit{Real data (left)}: Average rank and average/max accuracy of different proximity $\proximity{}$ and nonlinearity $\nonlinearity{}$ functions on all datasets used for structural node embedding.  
\textit{Synthetic data (right)}: Averaged clustering results for synthetic data with planted structural roles. For both tasks, we can dramatically improve on \colorbox{Gray}{GraphWave} by using a different proximity matrix and/or nonlinearity.
}
\label{tab:rank-struc}
\resizebox{.98\textwidth}{!}{
\begin{tabular}{@{}ll p{1cm}p{1cm}p{1cm} c p{1cm}p{1cm}p{1cm} c p{1cm}p{1cm}p{1cm} c p{1.1cm}p{1.1cm}p{1.1cm} @{}}
\toprule
& & \multicolumn{3}{c}{\textbf{Brazil}} && \multicolumn{3}{c}{\textbf{Europe}} && \multicolumn{3}{c}{\textbf{USA}} && \multicolumn{3}{c}{\textbf{Synthetic}} \\ \cline{3-5} \cline{7-9} \cline{11-13} \cline{15-17}
& & \textbf{Avg Rank} & \textbf{Avg Acc} & \textbf{Max Acc} && \textbf{Avg Rank} & \textbf{Avg Acc} & \textbf{Max Acc} && \textbf{Avg Rank} & \textbf{Avg Acc} & \textbf{Max Acc} && \multirow{2}{*}{\textbf{Hom}} & \multirow{2}{*}{\textbf{Comp}} & \multirow{2}{*}{\textbf{Silh}}\\  \cmidrule{1-2} \cmidrule{3-5} \cmidrule{7-9} \cmidrule{11-13} \cline{15-17}
\multirow{7}{*}{$\proximity{}$} & \ppmi & 28.00 & 37.72 &43.48 && 29.80 & 37.06 & 46.82 && 25.80 & 43.71 & 54.13 && .5283 & .5029 & .4986\\
& \colorbox{Gray}{\heat} & 6.80 & 68.75 & \textbf{72.37}  && 7.20 & 52.65 & 54.45 && 7.20 & 58.96 & \textbf{63.49} && .5951 & .5488 & .4392\\
& \ppr & 20.20 & 53.04 & 63.41 && 22.00 & 45.43 & 50.07 && 25.60 & 43.55 & 51.16 && .5951 & .5973 & \bf{.9307} \\
& \fabp & 19.80 & 52.70 & 70.15 && 23.00 & 44.94 & 49.90 && 22.00 & 47.65 & 56.92 && \bf{.7157} & .6627 & .5531\\
& $\invlap $ & 27.60 & 39.18 & 53.41 && 15.00 & 46.42 &\textbf{56.02} && 24.00 & 44.02 & 59.94 && .2071 & .1896 & .2499\\
& $\adj^2$ & 9.80 & 64.03 &71.85 && 12.80 & 50.27 & 53.97 && 9.80 & 57.67 & 59.83 && \textbf{.7156} & \bf{.6750} & .5760\\
& $\transMat^2$ & 13.40 & 63.01 &67.56 && 16.00 & 48.97 & 51.80 && 11.40 & 56.56 &58.56 && .6551 & .6071 & .4232\\ \cmidrule{1-2} \cmidrule{3-5} \cmidrule{7-9} \cmidrule{11-13} \cline{15-17}%
\multirow{5}{*}{$\nonlinearity{}$} & \colorbox{Gray}{\linear} & 14.23 & 60.33 &71.78 && 12.57 & 50.71 & \textbf{56.02} && 13.43 & 54.28 & 59.95 &&  & \multirow{5}{*}{N/A} & \\
& \rampedLog & 18.43 & 54.60 &71.85 && 21.71 & 44.19 & 53.65 && 16.14 & 52.01 & 62.73 && & &\\
& \bin{5}& 20.14 & 50.68 & 71.85 && 20.14 & 44.90 & 51.58 && 19.43 & 48.61 & 60.71 && & &\\
& \bin{50} & 11.85 & 60.78 & \textbf{72.37} && 13.14 & 49.21 & 54.68 && 17.14 & 51.94 & \textbf{63.49} && & &\\
& \bin{95}& 25.57 & 43.91 & 62.96 && 22.29 & 43.67 &54.45 && 23.86 & 44.68 & 56.97 && & &\\
\bottomrule
\end{tabular}
}
\end{table*}

\begin{figure*}[t]
    \centering
    \subfloat{%
      \includegraphics[width=.4\textwidth]{FIG/nonlinearity-legend-2.png}
    }

    \includegraphics[width=.3\linewidth]{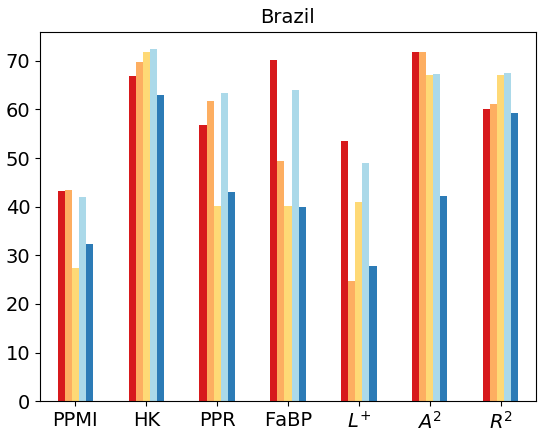}
    \includegraphics[width=.3\linewidth]{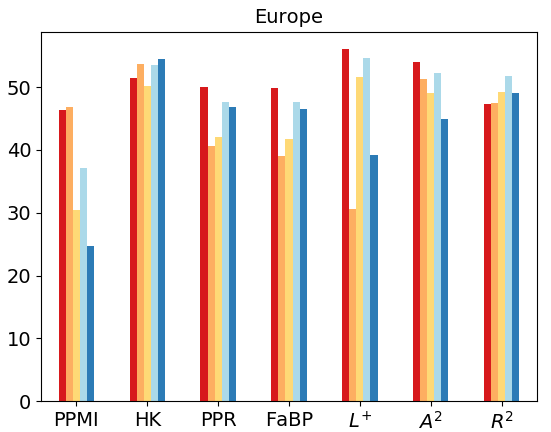}
    \includegraphics[width=.3\linewidth]{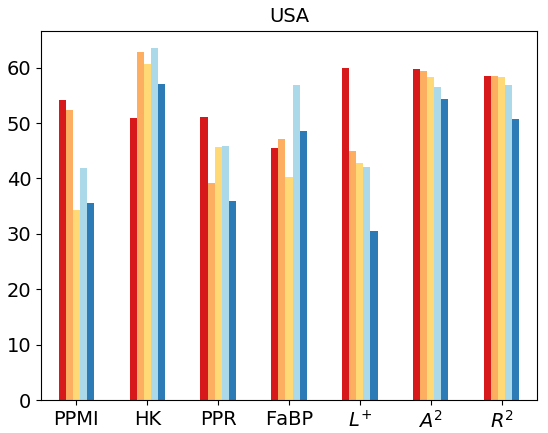}
    \caption{Node classification performance with structural embeddings. Many different proximity matrices and nonlinearity functions can yield high accuracy, often higher than existing method GraphWave.}
    \label{fig:struct-classification}
\end{figure*}

\vspace{0.1cm}
\noindent \textit{Node Classification.}
We again perform supervised machine learning to predict the node labels from the node embeddings, but in this case on datasets where the labels correspond to nodes' structural roles.  We plot the accuracy of each combination of design choice in Fig.~\ref{fig:struct-classification}, and the average rank, mean and maximum accuracy of each individual design choice in Tab.~\ref{tab:rank-struc}.

\noindent \textit{Node Clustering.}
Following the literature on structural node embedding~\cite{graphwave, jin2020understanding}, we also assess our methods using networks that are constructed to manifest distinctive structural roles.  Our goal is to cluster nodes with similar structural roles. 
We follow the dataset construction (cf. App.~\ref{app:clustering-data}) and clustering setup of \cite{graphwave}. These datasets exhibit clear role equivalence (perturbed by noise). For brevity, we only report results from embeddings without nonlinearity.  We assess the clustering quality using homogeneity, completeness, and silhouette score.

\vspace{0.1cm}
\noindent \textbf{Results}.  \textit{Node Classification.} We see different trends than positional node embeddings.  In this case, nonlinearity is not always helpful; indeed \linear is on average much more competitive.  However, all datasets, using another proximity method or nonlinearity improves on GraphWave as proposed, highlighting the flexibility of \method.  We find that a very simple nonlinearity, binarization, produces the best methods on two datasets: as CFS models the distribution of entries in each row, embedding a binary distribution simply models how many large proximities a node has to other nodes.  This corroborates a recent claim~\cite{jin2020understanding} that simple structural information suffices for these datasets.

\vspace{0.05cm}
\noindent \textit{Node Clustering.}
The results in Tab.~\ref{tab:rank-struc} (right) show that a variety of proximity matrices successfully cluster nodes by their structural roles, in some cases better than the heat kernel used in GraphWave~\cite{graphwave}.   We show similar results on unperturbed graphs in the supplementary \S~\ref{app:clustering-data}.
    
\begin{observation}
Within our \method framework, we discover design choices for structural embedding that improve on downstream tasks compared to existing methods.  In particular, we discover that some design choices used for positional node embeddings, like nonlinearity, can improve structural embeddings as well.   

\end{observation}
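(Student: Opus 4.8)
The plan is to establish this observation empirically, since it is a claim about downstream task performance rather than a formal identity. First I would instantiate \method with every combination of the seven proximity functions $\proximity{}$ (\ppmi, \heat, \ppr, \fabp, $\invlap$, $\adj^2$, $\transMat^2$) and the five nonlinearities $\nonlinearity{}$ (\linear, \rampedLog, \bin{5}, \bin{50}, \bin{95}), fixing the embedding function to CFS so that, by Theorem~\ref{thm:embfunc}, each configuration yields a genuinely \emph{structural} embedding. This gives the 35 methods whose performance I would measure on tasks where the ground truth is structural role: node classification on the airport networks (Brazil, Europe, USA), whose labels encode structural activity levels, and node clustering on synthetic graphs with planted roles.

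Second, I would fix GraphWave --- the \heat proximity with \linear nonlinearity and CFS --- as the incumbent baseline, and tune hyperparameters uniformly across all methods by grid search so the comparison is fair rather than an artifact of unequal tuning. For each dataset I would record micro-F1 / accuracy (classification; see Fig.~\ref{fig:struct-classification}) and homogeneity, completeness, and silhouette (clustering), then summarize the 35 configurations with a per-design-choice drilldown: average rank together with mean and maximum score, marginalizing over the other design axis. The first sentence of the observation then follows from showing that on every dataset some configuration that changes the proximity matrix and/or adds a nonlinearity improves on GraphWave --- e.g. in Tab.~\ref{tab:rank-struc}, $\adj^2$ and \fabp overtake \heat on the synthetic homogeneity and completeness metrics, and $\invlap$ and several binarized configurations attain the best max accuracy on the airport datasets.

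Third, for the more delicate second sentence --- that a \emph{positional} trick like nonlinearity can help structural embeddings --- I would isolate the nonlinearity axis by comparing \linear against \rampedLog and \bin{\threshold} with all else held equal. The key evidence is that \bin{50} produces the best method on two of the three real datasets (Brazil and USA). I would pair this with a mechanistic explanation: because CFS embeds the empirical distribution of each row of $\nonlinearSimMat$, applying \bin{\threshold} first makes CFS summarize simply \emph{how many} large proximities a node has, which is precisely a coarse structural signature; this reconciles the observed gain with the row-distribution view of CFS.

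The main obstacle is that, unlike in the positional setting where nonlinearity almost always helps, here its effect is non-monotone and dataset-dependent, with \linear in fact being the most competitive choice on average. Hence the claim must be the carefully hedged ``can improve'' rather than ``always improves,'' and the burden is to demonstrate this without cherry-picking a single lucky run. I would address this by reporting average rank over all configurations (not just the single best), so that an improvement visible in the max-accuracy column is corroborated rather than an outlier, and by noting explicitly where nonlinearity degrades performance (e.g. \bin{95}) to give an honest account of the complex effect.
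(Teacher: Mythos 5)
Your proposal matches the paper's own justification essentially point for point: the same 35 CFS-based configurations evaluated on the airport classification benchmarks and the GraphWave-style synthetic clustering task, the same per-design-choice drilldown by average rank and mean/max accuracy against GraphWave (\heat{} + \linear{} + CFS) as the incumbent, the same key evidence that \bin{50} yields the top method on Brazil and USA, and even the same mechanistic reading that binarization plus CFS counts a node's large proximities. You also correctly hedge the nonlinearity claim (the paper likewise notes \linear{} is more competitive on average here than in the positional setting, and reports clustering only without nonlinearity), so no gaps to flag.
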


\begin{table*}[t!]
\centering
\caption{Graph classification using averaged node embeddings (Eq.~\ref{eq:graphfeatures}) and baselines (gray). We improve on NetLSD (3/3 datasets) and RetGK (2/3 datasets), which leverage simpler features from \heat and \rwpower matrices, using our embeddings of these matrices.  We may also use different proximity matrices like \adjpower, which can further increase performance.} %
\label{tab:graph-classification}
\resizebox{\textwidth}{!}
 {
\begin{tabular}{llllllllgg}
\toprule
  & \ppmi & \fabp & \heat & \ppr & $\adjpower$ & $\rwpower$ & $\invlap$ & NetLSD & RetGK\\  \midrule
IMDB-M  & $38.98 \pm 0.56$ & $46.54 \pm 0.27$ & $48.18 \pm 0.19$ & $41.62 \pm 0.07$  & $\mathbf{49.44 \pm 0.36}$ & $47.42 \pm 0.32$ & $45.44 \pm 0.47$ & $44.17 \pm 0.05$ & $43.91 \pm 0.74$ \\
PROTEINS & $70.50 \pm 0.36$ & $73.38 \pm 0.19$ & $73.94 \pm 0.16$ & $71.64 \pm 0.08 $ & $72.36 \pm 0.34$ & $71.52 \pm 0.17$ & $70.76 \pm 0.30$ & $71.96 \pm 0.04$ & $\mathbf{74.37} \pm 0.06$ \\
PTC-MR & $56.80 \pm 0.40$ & $55.48 \pm 0.80$ & $\mathbf{59.18 \pm 0.97}$ & $58.84 \pm 0.71$ & $55.02 \pm 0.77$ & $58.22 \pm 0.60$ & $58.44 \pm 0.59$ & $58.84 \pm 1.37$ & $57.56 \pm 1.27$ \\\bottomrule

\end{tabular}
}
\end{table*}

\subsubsection{Comparing Design Choices for Positional \& Structural Embeddings.}  Based on all our node-level experiments, we see that although the same design choices prior to embedding ($\proximity{}, \nonlinearity{}$) can be used for positional or structural embeddings, in practice the best design choices for each kind of embedding tend to be different.  For instance, nonlinearity is almost always helpful for positional node embeddings, but only sometimes helpful for structural embeddings. Proximity functions \ppmi and $\invlap$ tend to be successful for positional node embeddings, but do not produce the best structural embeddings (clearly seen on the clustering tasks).  %

This analysis raises an important question: Can we characterize node proximity matrices that produce good embeddings of either type? We perform initial exploratory analysis in App.~\ref{app:exploratory}, investigating properties of the matrices produced by each combination of $\proximity{}$ and $\nonlinearity{}$.  We find that the row-wise sums of elements in matrices producing good positional node embeddings tend to have a bell-shaped distribution, whereas we observe power-law distributions in matrices that produce good structural embeddings.%

\begin{observation}
While positional and structural node embeddings may begin with the same node proximity designs, in practice the best designs for each kind of embedding method tend to differ.
\end{observation}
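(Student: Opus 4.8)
Since this claim is empirical rather than formal, the plan is to substantiate it by directly contrasting the design choices that maximize downstream performance in the two regimes, holding the embedding function fixed to the appropriate choice dictated by Theorem~\ref{thm:embfunc} (SVD for positional, CFS for structural). The primary evidence will be the average-rank columns of Tab.~\ref{tab:rank-prox} and Tab.~\ref{tab:rank-struc}, since averaging ranks across datasets (and, for structural embeddings, across both the classification and clustering tasks) aggregates over dataset-specific noise and lets me compare $\proximity{}$ and $\nonlinearity{}$ choices via a single number per regime.

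First I would argue along the nonlinearity axis. In Tab.~\ref{tab:rank-prox}, \rampedLog attains the best average rank while \linear is consistently the worst on the positional datasets, so the $\argmax{\nonlinearity{}}$ clearly favors adding a nonlinearity. In Tab.~\ref{tab:rank-struc}, by contrast, \linear is the best or near-best average rank on the structural datasets, and no nonlinearity dominates uniformly; this shows that the $\nonlinearity{}$ which is optimal for positional embedding is not optimal — and is often counterproductive — for structural embedding. Next I would argue along the proximity axis: the positional table ranks \ppmi and $\invlap$ at the top, whereas in the structural table these two matrices fall near the bottom, most starkly on the clustering metrics where $\invlap$ collapses, while \heat and $\adj^2$ lead the structural table without being the positional winners. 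Together the two axes give a complete contrast of the winning $(\proximity{}, \nonlinearity{})$ pairs across regimes.

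The main obstacle is that ``the best design'' is not a single well-defined object: performance varies across datasets and metrics, so a naive $\argmax{}$ could pick out a dataset-specific fluke rather than a genuine regime-level difference. I would address this by requiring the contrast to hold consistently — the same direction of the nonlinearity and proximity effects across all three positional datasets and across both the real structural-classification datasets and the synthetic clustering benchmark — so that the observed difference is robust rather than an artifact of any one column.

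A secondary subtlety is to rule out confounding from the embedding function itself. Because the comparison fixes SVD for positional and CFS for structural per Section~\ref{sec:theory}, any difference in the best-performing $\proximity{}$ and $\nonlinearity{}$ is attributable to those two steps under their respective, fixed embedding functions, which is precisely what the observation concerns. Making this explicit closes the argument: the same pre-embedding design space is available to both regimes, yet its empirical optimum shifts depending on whether the downstream goal is positional or structural.
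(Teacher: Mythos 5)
Your proposal is correct and matches the paper's own justification essentially point for point: the paper supports this observation with exactly the same cross-table contrast, noting that nonlinearity (e.g., \rampedLog) is almost always helpful for positional embeddings but only sometimes for structural ones (where \linear is much more competitive), and that \ppmi and $\invlap$ lead the positional rankings yet fare poorly for structural embeddings, most clearly on the clustering task. Your added points about fixing the embedding function (SVD vs.\ CFS per Theorem~\ref{thm:embfunc}) and requiring consistency across datasets are sensible elaborations of the same argument rather than a different route.
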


This may be one reason why the survey work~\cite{rossi2020proxstruc}, characterizing existing examples of positional and structural node embedding methods, judged their methodology to be fundamentally different (even though our framework and the theory of ~\cite{equiv-prox-struc} show a methodological connection in principle).

\subsection{Graph-Level Embedding.}
\label{sec:exp-graph}
\label{sec:graphlevel}

We now investigate \method's effectiveness in learning graph features from various node proximity matrices. Intuitively, we expect that our more expressive features will allow us to classify graphs more accurately than previous works.  

\vspace{0.1cm}
\noindent \textbf{Setup.} Our experiments evaluate the graph classification accuracy on PTC-MR, IMDB-M and PROTEINS datasets~\cite{tudataset}.  As our focus is learning from the graph structure alone, we ignore node attributes. %
We only use CFS (i.e. structural embeddings), which are comparable across graphs~\cite{rgm}, and do not use nonlinearity $\nonlinearity{}$ as the baselines do not.  We use a linear SVM to predict graphs' labels from their features; we report the 10-fold cross-validation accuracy averaged over 5 trials~\cite{rgm}.  

We compare against NetLSD~\cite{netlsd} and RetGK~\cite{retgk}, alternative ways of deriving graph features from \heat and \rwpower proximity matrices, respectively (\S\ref{sec:graph}).  We use NetLSD's default 250 heat kernel values logarithmically spaced in the range $\{10^{-2}, 10^2\}$.  We run RetGK using its defaults of 50th-order random walk return probabilities and its proposed exact and approximate successive kernel embedding ($\kappa$ and $\phi$ in \S\ref{sec:graph}). %
We describe our hyperparameter settings in supplementary App.~\ref{app:prox-settings}; we parallel the settings of NetLSD and RetGK, and carefully avoid giving ourselves any unfair advantage over them (in fact, they have a slight advantage if anything: we leave NetLSD with its default higher dimensionality and RetGK with its default successive kernel embeddings). 

\vspace{0.1cm}
\noindent \textbf{Results.}
In Tab.~\ref{tab:graph-classification}, we see that our methods generally improve on NetLSD and RetGK as a way of getting graph features from their node proximity matrices. In particular, embedding \rwpower using Eq.~\ref{eq:node-embed-multiscale} outperforms RetGK, which is also based on the \rwpower proximity matrix, on two out of three datasets (PTC-MR and IMDB-M).  Similarly embedding \heat outperforms NetLSD, which also uses the heat kernel matrix, on all three datasets (and outperforms all other methods on two datasets).  This is %
strong evidence that by modeling each node's full distribution of proximities rather than its self-proximity, \method captures more useful information.  

Because we keep the embedding dimension the same as (or lower) than NetLSD and RetGK, which capture only a single value for a node at each proximity scale (whereas we return a 10-dimensional embedding), we necessarily consider much fewer scales.  Our good comparative performance indicates that modeling more graph information at fewer scales is generally superior to modeling less information at more scales.  
\begin{observation}
\method gives us a way to learn graph features from a given node proximity matrix that yield greater accuracy than previous works~\cite{netlsd,retgk}, likely because of their expressivity (\S~\ref{sec:graph}).
\end{observation}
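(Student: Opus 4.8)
The plan is to support the claim by a controlled empirical comparison whose \emph{direction} is predicted in advance by an expressivity argument, since ``greater accuracy'' is an empirical rather than an analytic assertion. The two preceding theorems already pin down the baselines inside our framework: NetLSD and RetGK are exactly the instances of Eq.~\eqref{eq:graphfeatures} in which $\embed{\simMat} = \text{diag}(\simMat)$, so each node contributes only its self-proximity $\simMat_{ii}$ at each scale and every off-diagonal (inter-node) proximity $\simMat_{ij}$ is discarded. First I would make precise the sense in which \method is strictly more expressive on the \emph{same} proximity matrix: with $\nonlinearity{}$ the identity, CFS maps node $u$ to samples of $\phi_u(t)=\sum_{v=1}^{\numberOfNodes}\exp(it\,\simMat_{vu})$, the empirical characteristic function of the \emph{full} distribution of $u$'s proximities. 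Since the diagonal entry is a single coordinate of that distribution, the baseline feature is a lossy function of the information available to CFS at the same scale; hence \method loses nothing the baselines retain while additionally encoding off-diagonal structure.

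Next I would design the comparison so that this expressivity gap is the only uncontrolled variable. Concretely, I would (i) fix $\proximity{}$ to each baseline's own matrix (\heat for NetLSD, \rwpower for RetGK), (ii) set $\aggregate{}$ to mean pooling and the downstream classifier to a linear SVM, matching the baselines' postprocessing, and (iii) cap \method's total embedding dimension at no more than the baselines'. Because CFS returns a multi-dimensional vector per scale whereas $\text{diag}(\simMat)$ returns a scalar, holding total dimension fixed forces \method to use strictly \emph{fewer} scales. This converts the theoretical advantage into a concrete tradeoff---richer per-scale information against a coarser range of scales---whose net sign cannot be deduced and must be measured.

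The final step is to run graph classification on the datasets of Tab.~\ref{tab:real-graph} under $10$-fold cross-validation averaged over several trials, comparing the three accuracies per dataset. Under the controlled setup any gain of \method over NetLSD (resp.\ RetGK) is attributable precisely to capturing inter-node proximities at the price of fewer scales, so the predicted and to-be-reported outcome (Tab.~\ref{tab:graph-classification}) is that \method dominates on most datasets; as a byproduct this yields the secondary reading that modeling more information at fewer scales beats modeling less at more scales.

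I expect the main obstacle to be that strictly greater information content does \emph{not} logically entail better downstream performance: the extra off-diagonal signal could be noisy, redundant, or conducive to overfitting, and the scales-versus-dimension tradeoff could in principle tip toward the baselines on some datasets. The argument is therefore unavoidably of the form ``expressivity predicts, controlled experiments confirm,'' and the delicate part is ensuring the comparison grants the baselines no handicap---indeed leaving NetLSD at its default higher dimensionality and RetGK with its default successive-kernel embeddings---so that an observed improvement is \emph{conservative} evidence for the expressivity explanation rather than an artifact of favorable tuning on our side.
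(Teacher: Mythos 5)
Your proposal matches the paper's approach essentially point for point: you invoke the two preceding theorems to cast NetLSD and RetGK as $\embed{\simMat}=\text{diag}(\simMat)$ instances of Eq.~\eqref{eq:graphfeatures}, then run the same controlled comparison (same proximity matrices \heat and \rwpower, mean pooling, no nonlinearity, linear SVM, 10-fold cross-validation averaged over trials, total dimension capped at or below the baselines' so that \method uses fewer scales), and correctly treat the accuracy claim as one that only the experiments of Tab.~\ref{tab:graph-classification} can settle, with the expressivity argument supplying the predicted direction. One minor caution: your claim that the baseline feature is a \emph{lossy function} of what CFS sees slightly overstates the paper's position, since the empirical characteristic function retains only the permutation-invariant multiset of a row's proximities and thus cannot isolate which entry is the diagonal $\simMat_{ii}$; the paper claims only that its embeddings capture off-diagonal information the baselines discard, not strict informational dominance.
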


\subsection{Additional Analysis.}
For all our classification tasks, we also study the effect of proximity order for multiscale embeddings in the supplementary App.~\ref{app:sensitivity-prox-order}.  In general, we find that modeling strongly local information with low-order proximity yields good performance (and is computationally cheapest).  %

\section{Conclusion}
\label{sec:conclusions}
We have proposed the first unifying perspective that encompasses both proximity-preserving and structural node embedding methods, clarifying their contested technical relationship~\cite{equiv-prox-struc, rossi2020proxstruc}.  This allows us to learn either kind of node embedding from any node proximity matrix that can be computed on a graph, which arises throughout the field of graph mining.  Our three-step framework \method opens up a variety of design choices (we empirically study 35), encompassing existing methods and also producing novel ones.  We provide insights into productive design choices for node-level graph mining using either kind of embedding.  By aggregating a graph's embeddings, we can derive graph-level features from the node proximities; we show precisely what information we can capture that is lost by other graph kernels and feature learning methods. 

Within \method there is still room to explore more design choices, such as other embedding functions (e.g. nonlinear autoencoders used by a few methods for positional node embedding~\cite{dngr}, or trainable characteristic function sampling recently proposed for node and graph embedding~\cite{feather}).  For graph embedding, other designs use successive kernel embedding and the incorporation of node attributes~\cite{retgk}. Furthermore, fast approximate computation of node proximities can allow \method to scale to very large graphs~\cite{netmf, graphwave}.  
\section*{Acknowledgements}
This work is supported by
NSF Grant No.\ IIS 1845491, Army Young Investigator Award No.\ W9-11NF1810397, 
and Adobe, Amazon, Facebook, and Google faculty awards. 
Any opinions, findings, and conclusions or recommendations 
expressed in this material are those of the {authors} and do not necessarily reflect the views of the %
funding parties.

\setlength{\bibsep}{0.85pt plus 0.3ex}

\bibliographystyle{unsrtnat}
\bibliography{BIB/bibliography}

\clearpage

\appendix
\section{Proofs}
\label{app:proofs}

\subsection{Existing Node Embedding Methods as Special Cases of Eq.~(\ref{eq:node-embed-multiscale}).}
\label{app:cases}
For all the methods in Theorem~\ref{thm:special-cases}, we list the specific choices of node proximity $\proximity{}$, nonlinearity $\nonlinearity{}$, and embedding $\embed{}$ functions (as well as whether or not they use multiscale proximity) that make them conform to our framework.
\begin{itemize}
\item GraphWave~\cite{graphwave}: the node proximity $\proximity{}$ computes the graph's heat kernel matrix, the nonlinearity $\nonlinearity{}$ is the identity function, and the embedding function $\embed{}$ is characteristic function sampling. The multiscale version of GraphWave is given by Equation~\ref{eq:node-embed-multiscale}.  
\item NetMF~\cite{netmf}: the node proximity $\proximity{}$ computes the graph's PPMI matrix, the nonlinearity $\nonlinearity{}$ is \rampedLog, and the embedding function $\embed{}$ is SVD.   
\item InfiniteWalk~\cite{infinitewalk}: the node proximity $\proximity{}$ computes the PPMI matrix in the window size limit $T = \infty$, or the Laplacian pseudoinverse $\invlap$ as an approximation of this quantity up to a low-rank correction term.  The nonlinearity $\nonlinearity{}$ is \rampedLog (or, for the Laplacian pseudoinverse, the authors consider \bin{\threshold}), and the embedding function $\embed{}$ is SVD.   
\item HOPE~\cite{hope}: the node proximity $\proximity{}$ computes the personalized pagerank matrix or the common neighbors matrix $\adj^2$, the nonlinearity $\nonlinearity{}$ is the identity function, and the embedding is SVD (possibly approximated for scalability~\cite{hope}).
\item GraRep~\cite{cao2015grarep}: the node proximity $\proximity{}$ is derived from powers of the adjacency matrix, the nonlinearity $\nonlinearity{}$ is $\rampedLog$, and the embedding function is SVD; this method computes multiscale node embeddings by concatenating embeddings derived from different powers of the adjacency matrix.  
\item DNGR~\cite{dngr}: the node proximity $\proximity{}$ computes the graph's PPMI matrix (in a slightly different way than NetMF), the nonlinearity $\nonlinearity{}$ is \rampedLog, and the nonlinear embedding function $\embed{}$ is implemented with a stacked denoising autoencoder. 
\item sRDE~\cite{heimann2020structural}: the node proximity $\proximity{}$ in a \emph{signed} network is computed using a signed random walk with restart procedure, the nonlinearity $\nonlinearity{}$ is the identity function, and the embedding function $\embed{}$ consists of computing a histogram (which is also permutation-invariant) of each node's signed proximity scores. 
\end{itemize}

\subsection{Embedding Functions that Produce Positional vs. Structural Node Embeddings}
Here we give the proof of Theorem~\ref{thm:embfunc}:
\label{app:embfunc}
\begin{proof}
\noindent \textbf{Part 1: SVD yields different embeddings for automorphic nodes.}
Recall that finding the SVD of $\nonlinearSimMat = [\nonlinearSimMat_1, \matZero; \matZero, \nonlinearSimMat_2]$ is equivalent to finding the eigendecomposition of $\nonlinearSimMat \nonlinearSimMat^\top$: the singular vectors (columns of $\matU$) are the eigenvectors and the singular values (diagonal entries of $\matSigma$) the square roots of eigenvalues of $\nonlinearSimMat \nonlinearSimMat^\top$.  Since the embeddings are formed from the first $\embeddingDimension$ columns of $\matU$ and $\matSigma$, we equivalently analyze the eigendecomposition of $\nonlinearSimMat \nonlinearSimMat^\top$.
\begin{enumerate}
    \item $\nonlinearSimMat_1$ and $\nonlinearSimMat_2$ are similar matrices and thus have the same eigenvalues and eigenvectors.  
    \item $\nonlinearSimMat\nonlinearSimMat^\top$ has the same eigenvalues as $\nonlinearSimMat_1$ (equivalently, $\nonlinearSimMat_2$).  First, we show that all eigenvalues of $\nonlinearSimMat_1, \nonlinearSimMat_2$ are eigenvalues of $\nonlinearSimMat \nonlinearSimMat^\top$: if $\nonlinearSimMat_1\vecv_{\lambda} = \lambda \vecv_{\lambda}$, then $\nonlinearSimMat\nonlinearSimMat^\top[\vecv_{\lambda}, \matZero] = \lambda[\vecv_{\lambda},\matZero]$; $\nonlinearSimMat\nonlinearSimMat^\top[\matZero, \vecv_{\lambda}] = \lambda[\matZero,\vecv_{\lambda}]$.  Conversely, we also show that all eigenvalues of $\nonlinearSimMat \nonlinearSimMat^\top$ are eigenvalues of $\nonlinearSimMat_1, \nonlinearSimMat_2$.  Without loss of generality we can write any eigenvector $\vecv$ of $\nonlinearSimMat$ split in half as $[\vecv_1, \vecv_2]$, such that $\nonlinearSimMat\nonlinearSimMat^\top [\vecv_1, \vecv_2] = \lambda [\vecv_1, \vecv_2]$. Then $\nonlinearSimMat\nonlinearSimMat^\top [\vecv_1, \vecv_2] = [\nonlinearSimMat_1 \nonlinearSimMat_1^\top, \matZero; \matZero, \nonlinearSimMat_2 \nonlinearSimMat_2^\top] [\vecv_1, \vecv_2] = [\nonlinearSimMat_1 \vecv_1 + \matZero \vecv_2, \matZero \vecv_1 + \nonlinearSimMat_2 \vecv_2] = [\nonlinearSimMat_1 \vecv_1, \nonlinearSimMat_2 \vecv_2]$.  Since $[\vecv_1, \vecv_2]$ was an eigenvector of $\nonlinearSimMat \nonlinearSimMat^\top$, $[\nonlinearSimMat_1 \vecv_1, \nonlinearSimMat_2 \vecv_2] = \lambda [\vecv_1, \vecv_2]$ and thus $\nonlinearSimMat_1 \vecv_1 = \lambda \vecv_1$ and $\nonlinearSimMat_2 = \vecv_2$, meaning that $\lambda$ is also an eigenvalue of $\nonlinearSimMat_1$ and $\nonlinearSimMat_2$. 
    
    \item Thus, each of the top singular vectors of $\nonlinearSimMat$ that form the dimensions of $\embeddingMatrix$ which form the embedding dimensions up to weighing by the singular values, has the form $[\matZero, \vecv_\lambda]$ or $[\vecv_\lambda, \matZero]$.  (Since the graphs are connected i.e. nonempty, $\vecv_\lambda \neq \matZero$.) That is, along any dimension the nodes in one graph will have a nonzero embedding value and the nodes in the other graph will have a zero embedding value.  
\end{enumerate}

This is of course an extreme case for a highly contrived example (perfectly automorphic nodes in perfectly disconnected components of a graph), but in general we can see (and the research community has found experimentally on real-world networks) that the SVD embeddings encode positional rather than structural information, and nodes in very different parts of the graph will generally not be close in the embedding space. 

\noindent \textbf{Part 2: Permutation-invariant row functions such as CFS yield identical embeddings for automorphic nodes.}
Let $n$ be the number of nodes in either graph $\graph_1$ or $\graph_2$.  Then the first $n$ nodes in $\graph$ correspond to $\graph_1$ and the second $n$ nodes in graph correspond to $\graph_2$. So for node $i \in [1,\ldots,n]$, the ID of its counterpart under the isomorphism $\pi$ is $\pi(i) + n$. Thus, we want to show that the rows of node $i$ and node $\pi(i) + n$ in $\nonlinearSimMat$ are equivalent up to permutation.  Formally, we show that for any $i,j \in [1, \ldots, n]$, $\nonlinearSimMat_{ij} = \nonlinearSimMat_{\pi(i) + n, \pi(j) + n}$.  

Let $\stdBasis{i}$ be the $i$-th standard basis. Then $\nonlinearSimMat_{ij} = \nonlinearSimMat_{1_{ij}} =  \stdBasis{i} \nonlinearSimMat_1 \stdBasis{j}^\top = \stdBasis{i} \perm^\top \nonlinearSimMat_2 \perm \stdBasis{j}^\top = \stdBasis{\pi(i)} \nonlinearSimMat_2 \stdBasis{\pi(j)}^\top = \nonlinearSimMat_{2_{\pi(i)\pi(j)}} = \nonlinearSimMat_{\pi(i)+n, \pi(j)+n}$. This shows that any nonzero element in the $i$-th row of $\nonlinearSimMat$ (which must occur in the first $n$ elements) has a corresponding element among the second $j$ elements of the $(\pi(i) + n)$-th row.  Of course, the second $n$ elements in the $i$-th row and the first $n$ elements in the $(\pi(i) + n)$-th row of $\nonlinearSimMat$ are zeros.  Thus, these rows have the same elements and are identical up to permutation.
\end{proof} 

\section{Node Proximity Hyperparameters}
\label{app:prox-settings}

\noindent \textbf{For positional node embeddings}:  All embeddings have the standard 128 dimensions~\cite{netmf}.  We tuned the hyperparameters of the node proximity functions \ppmi, \ppr, \heat, and \fabp on the Wikipedia dataset via grid search over the following values: 
\begin{enumerate}
    \item \heat: we tried scale values of $\scale \in [0.01,0.1,1,10,25,50]$, and find best performance from $\scale = 0.1$.
    \item \ppr: we tried decay parameter values $\beta \in [0.9, 0.5, 0.1, 0.01]$, and find best performance from $\beta = 0.01$. 
    \item \ppmi: we tried window size $\window \in [2,5,10]$ and found little difference, so we use $T=10$ with the approximate NetMF method~\cite{netmf}.
    \item \fabp: we tried  values for the parameters $a,c \in \{0.01, 0.1, 1, 10\}$.  We found little difference for values of $a$, but smaller $c$ can lead to better performance, so we chose $a=1$ and $c=0.01$. 
\end{enumerate}

\noindent \textbf{For structural embeddings}: On these smaller graphs, all embeddings are 50-dimensional. %
We tuned the hyperparameters of the node proximity functions \ppmi, \ppr, \heat, and \fabp on the USA dataset via grid search over the following values: 
\begin{enumerate}
    \item \heat: we used multiscale embeddings following~\cite{graphwave}. We found that on the airports datasets, their automatic scale selection procedure yielded unintuitively large and poorly performing scales.~\footnote{For example, applying the official implemention of GraphWave~\cite{graphwave} using the automatic scale selection on USA-airports dataset gives a range of scale parameters $\scale_{min} = 2014340.3$ and $\scale_{max} = 8763076.3$.}  Thus, we tried $\{1, 5, 10, 25, 50\}, \{0.1, 1, 10, 25, 50\}$ and $\{0.01, 0.1, 1, 10, 100\}$, and find best performance from the latter. 
    
    \item \ppr: we tried decay parameter values $\beta \in [0.9, 0.5, 0.1, 0.01]$, and find $\beta = 0.01$ works best. 
    \item \ppmi: we tried window size $\window \in [2,5,10]$ and found that $\window = 10$ achieves best performance.
    \item \fabp: we tried parameter values $a,c \in [0.01, 0.1, 1, 10]$, but in the end we found that the heuristic proposed in \cite{fabp} for setting $a$ and $c$ works best: $ a=4h_h^2/(1-4h_h^2), c=2h_h/(1-4h_h^2)$. Here the ``about-half'' homophily factor $h_h = \sqrt{\frac{-c_1+\sqrt{c_1^2 + 4c_2}}{8c_2}}$ where $c_1 = \text{Tr}(\degMat)+2, c_2 = \text{Tr}(\degMat^2)-1$.
\end{enumerate}

\noindent \textbf{For graph classification}: %
For \heat, we use scale parameters  $\scale \in \{0.01, 0.1, 1, 10, 100\}$ to parallel NetLSD. For proximity functions computed by matrix powers (\adjpower and \rwpower), we consider powers $\order \in \{1,2,3,4,5\}$.  At each of the five parameter settings, we learn 10-dimensional embeddings and use Eq.~\ref{eq:node-embed-multiscale} to form a multiscale embedding with 50 dimensions (to match or stay below the modeling capacity of NetLSD and RetGK).  Between NetLSD's higher (250) dimension and RetGK's successive kernel embeddings, our experimental setup gives NetLSD and RetGK each a small advantage.%

\section{Clustering Structural Node Embedding: Additional Details and Results} \label{app:clustering-data}
We use the synthetic graph generation pipeline provided by GraphWave~\cite{graphwave}. The graphs are given by 5 basic shapes of one of different types (``house'', ``fan'', ``star'')~\cite{graphwave} that are placed on a cycle of length 30.  In the main paper, we add 10\% random edges to perturb the otherwise perfect role equivalences of nodes in the same part of different shape; however, in Tab.~\ref{tab:clustering-unperturbed}, we include clustering results on noiseless networks exhibiting perfect role equivalence.  We use agglomerative clustering (with single linkage) to cluster the node embeddings learned by each method.

\begin{table}[h]
\centering
\resizebox{\columnwidth}{!}
 {
\begin{tabular}{@{}lllgllll@{}}
\toprule
     & \ppmi & \fabp & \heat & \ppr & $\adj^2$ & $\transMat^2$ & $\invlap$ \\  \midrule
Homogeneity  & 0.8738  & \textbf{1.000} & 0.9727 & \textbf{1.000} & \textbf{1.000} & 0.9297 & 0.8370\\
Completeness   & 0.8367 & \textbf{1.000} & 0.9407 & \textbf{1.000} & \textbf{1.000} & 0.9057  & 0.7812\\
Silhouette & 0.8241 & 0.9089 & 0.8814 & \textbf{0.9702} & 0.9204 & 0.8616 & 0.8313 \\ \bottomrule

\end{tabular}
}
\caption{Clustering results on \emph{noiseless} synthetic datasets. \heat used in \colorbox{Gray}{GraphWave} is outperformed on all metrics by other proximity matrices.}
\label{tab:clustering-unperturbed}
\end{table}

\section{Proximity Order in Structural Embedding}
\label{app:sensitivity-prox-order}
The order of proximity that node embeddings model has been shown to be very important.  While some methods by default model low-order (e.g. 2nd-order) proximities~\cite{jin2020understanding}, other methods try to balance low-order and high-order proximities to capture local and global information.  This has been done with multiscale embeddings, whether positional~\cite{cao2015grarep} or structural~\cite{graphwave}.  Setting hyperparameters that govern the order of proximity is thus important to understand.  

\noindent \textbf{Setup.}
For node and graph classification, we consider the effect of varying the order for methods consisting of powers of a (filtered) similarity matrix $\nonlinearSimMat$ (i.e. \adjpower or \rwpower) when computing multiscale embeddings (Eq.~\ref{eq:node-embed-multiscale}).  We only consider up to 4th order for positional node embeddings due to the larger size of those datasets (which is also why we omit the largest dataset BlogCatalog).  For any value of $k$, we compute the embeddings from each power of $\nonlinearSimMat, \nonlinearSimMat^2, \ldots, \nonlinearSimMat^k$ (using \rampedLog nonlinearity for positional node embeddings and \linear for structural embeddings, nonlinearity functions which performed well on average for each kind of embedding in \S~\ref{sec:exp}) and concatenate the resulting embeddings.  Note that for graph classification experiments, we now learn a 50-dimensional embedding at \emph{each} scale, as we are not comparing to baseline methods  now.  

\noindent \textbf{Results}.  
The results are shown in Fig.~\ref{fig:sensitivity-order}.  We can see, confirming the intuition of prior structural embedding methods~\cite{jin2020understanding} that lower order proximity is sufficient for best performance and saves the computational expense of computing higher order node proximities (which amounts to additional multiplications of increasingly dense matrices).  

\begin{observation}
Modeling low-order node proximity (however, beyond first-order proximity, or direct edge connections alone) is generally sufficient for both kinds of embedding methods.
\end{observation}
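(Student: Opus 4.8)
The plan is to establish this observation empirically through a sensitivity analysis over the proximity order $\order$, since the claim concerns downstream task performance rather than an algebraic identity. I would restrict attention to the two proximity functions whose order is an explicit integer hyperparameter, \adjpower and \rwpower, so that ``order'' has a clean, controllable meaning. For each target order $\order$, I would form a multiscale embedding via Eq.~\eqref{eq:node-embed-multiscale} by concatenating the per-scale embeddings learned from each of $\nonlinearSimMat, \nonlinearSimMat^2, \ldots, \nonlinearSimMat^\order$, while fixing the nonlinearity to the choice that performed best on average for each embedding type (\rampedLog for positional, \linear for structural) so that the only varying factor across runs is $\order$ itself.

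First I would run the positional case, using SVD as $\embed{}$: I would sweep $\order$ on the node-classification datasets, omitting the largest one (BlogCatalog) and capping at $\order = 4$ for tractability, and record micro-F1. Next I would run the structural case, using CFS as $\embed{}$: I would sweep $\order$ on the structural node-classification datasets (Brazil, Europe, USA) and on the graph-classification datasets (PTC-MR, IMDB-M, PROTEINS), recording accuracy; for graph classification I would learn a full-dimensional embedding at each scale, since here no baseline comparison constrains the dimension. Plotting accuracy against $\order$ (Fig.~\ref{fig:sensitivity-order}) should reveal the two-sided pattern the claim asserts: performance rises from $\order = 1$ to some small order (typically $2$ or $3$), confirming that direct edges alone are insufficient, and then plateaus or slightly declines, confirming that higher-order proximity is unnecessary.

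To close the argument I would pair the accuracy curves with the computational observation that each increment of $\order$ requires an additional multiplication of increasingly dense matrices, so the plateau means that higher orders incur cost without benefit. The main obstacle is \emph{generality}: the statement asserts the pattern holds across both embedding types and all datasets, yet in practice the saturation order and the precise shape of each curve may vary by dataset and by proximity matrix, and some curves may be nearly flat so that even the ``beyond first-order'' jump is subtle. I would address this by reporting every curve rather than a single average, and by framing the conclusion as a robust qualitative trend (``generally sufficient'') consistent with prior structural-embedding findings~\cite{jin2020understanding}, rather than as a uniform numerical threshold.
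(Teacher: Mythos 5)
Your proposal is correct and takes essentially the same route as the paper's own Appendix~\ref{app:sensitivity-prox-order}: restricting ``order'' to the matrix-power proximities \adjpower and \rwpower, concatenating per-scale embeddings from $\nonlinearSimMat, \nonlinearSimMat^2, \ldots, \nonlinearSimMat^\order$ via Eq.~\eqref{eq:node-embed-multiscale} with the nonlinearity fixed to \rampedLog (positional) and \linear (structural), sweeping $\order$ up to 4 while omitting BlogCatalog, using a larger per-scale dimension for graph classification since no baseline constrains it, and closing with the same computational remark about multiplying increasingly dense matrices. The only nuance is that the paper does not emphasize a visible rise-then-plateau on every curve; it simply reads Fig.~\ref{fig:sensitivity-order} as showing low-order proximity suffices, which your ``robust qualitative trend'' framing already accommodates.
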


\begin{figure}[ht!]
\centering
    \begin{subfigure}{0.46\linewidth}
    \centering
    \includegraphics[width=\columnwidth]{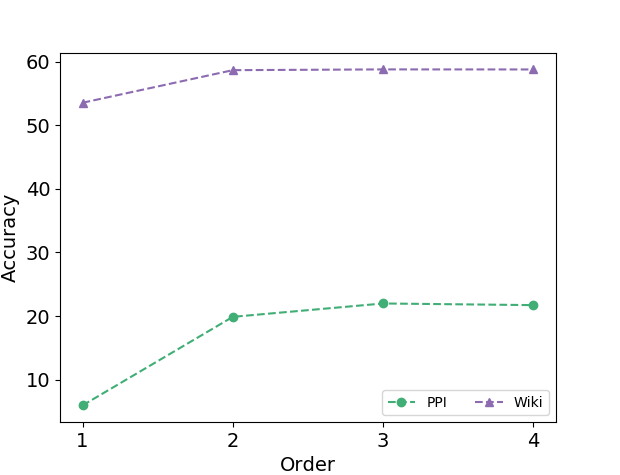}
    \caption{Proximity node classification: Multiscale \adjpower}
    \label{fig:adj-order-node}
    \end{subfigure}
    ~
    \begin{subfigure}{0.46\linewidth}
    \centering
    \includegraphics[width=\columnwidth]{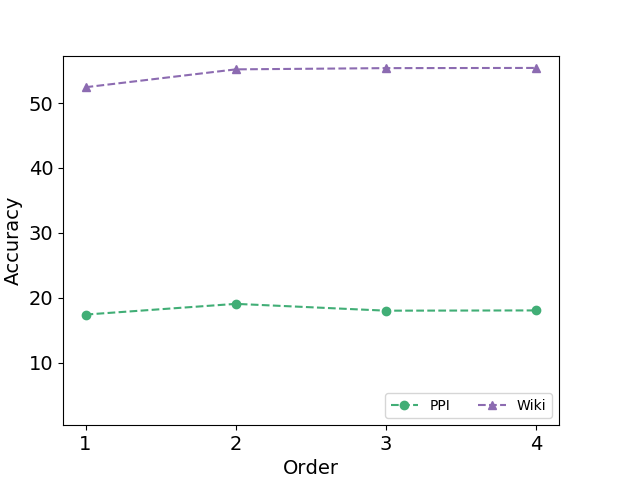}
    \caption{Proximity node classification: Multiscale \rwpower}
    \label{fig:adj-order-node}
    \end{subfigure}
    ~
    \begin{subfigure}{0.46\linewidth}
    \centering
    \includegraphics[width=\columnwidth]{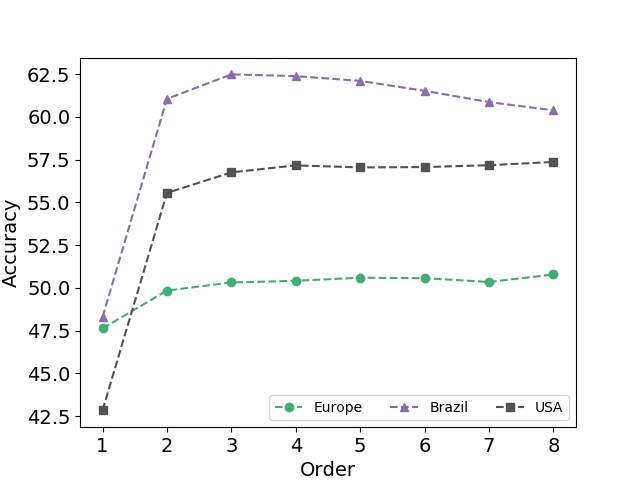}
    \caption{Structural node classification: Multiscale \adjpower}
    \label{fig:adj-order-node}
    \end{subfigure}
    ~
    \begin{subfigure}{0.46\linewidth}
    \centering
    \includegraphics[width=\columnwidth]{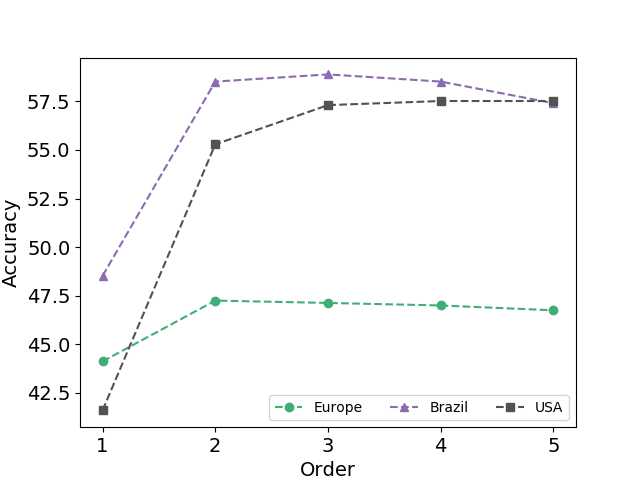}
    \caption{Structural node classification: Multiscale \rwpower}
    \label{fig:rw-order-node}
    \end{subfigure}
    
    \begin{subfigure}{0.46\linewidth}
    \centering
    \includegraphics[width=\columnwidth]{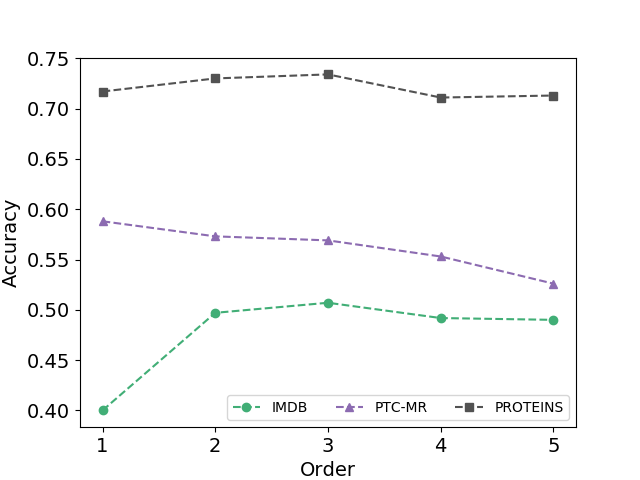}
    \caption{Graph classification: Multiscale \adjpower}
    \label{fig:adj-order-graph}
    \end{subfigure}
    ~
    \begin{subfigure}{0.46\linewidth}
    \centering
    \includegraphics[width=\columnwidth]{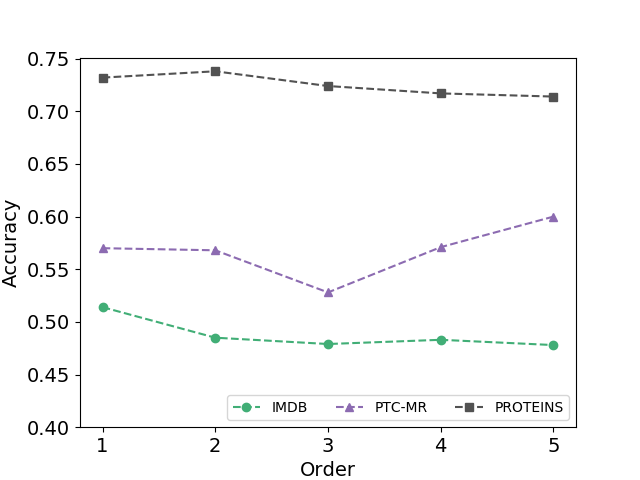}
    \caption{Graph classification: Multiscale \rwpower}
    \label{fig:rw-order-graph}
    \end{subfigure}
    
\caption{Effect of proximity order on node and graph classification.  Low order proximities are sufficient to achieve good performance.}
\label{fig:sensitivity-order}
\end{figure}

\section{Proximity Matrix Properties for Effective Node Embeddings}
\label{app:exploratory}
A powerful tool for the design of future node embedding methods would be an \emph{intrinsic} characterization of successful design choices for node embedding; this could allow for effective model selection without relying on \emph{extrinsic} evaluation (i.e. performance on downstream tasks as in \S~\ref{sec:exp}).  The node embedding step usually leverages standard dimensionality reduction techniques; from a graph mining perspective, the most interesting part is the construction of (potentially nonlinearly transformed) node proximities.   Thus, we seek to understand: how can we characterize choices $\proximity{\nonlinearity{\adj}}$ that yield useful (positional or structural) node embeddings?  While effective intrinsic analysis of node embedding methods is a major open question, we present some initial exploratory analysis to prompt further investigation.  

\subsection{Positional Embeddings}
\label{app:prox-stat-dist}
In a node proximity matrix, the sums of each row correspond to the total proximity scores each node has to all other nodes.  Our intuition is that if we are to expect good positional node embeddings, most nodes should have a moderate amount of total proximity to other nodes---too low, and the embedding objective will have too little similarity information to learn an effective embedding; too high, and the embedding objective will try to embed this node indiscriminately similarly to many other nodes.

\noindent \textbf{Setup.}
In Fig.~\ref{fig:prox-degdist}, we visualize the distribution of row sums of all node proximity matrices, arising from each combination of node proximity and nonlinearity function that we evaluated in this work.

\noindent \textbf{Results.}
Some of these distributions exhibit a bell curve shape with the values concentrated in the middle of the distribution, while others exhibit a power law distribution with a single long tail.  (Note that for the \bin{5} nonlinearity, the tail is on the left as most values in the matrix are 1, so low row sums are the exception.  In general, the tail consists of the large row sums, as is typical for most power law distributions.)  

\begin{figure*}[t]
    \centering
    \includegraphics[width=\linewidth]{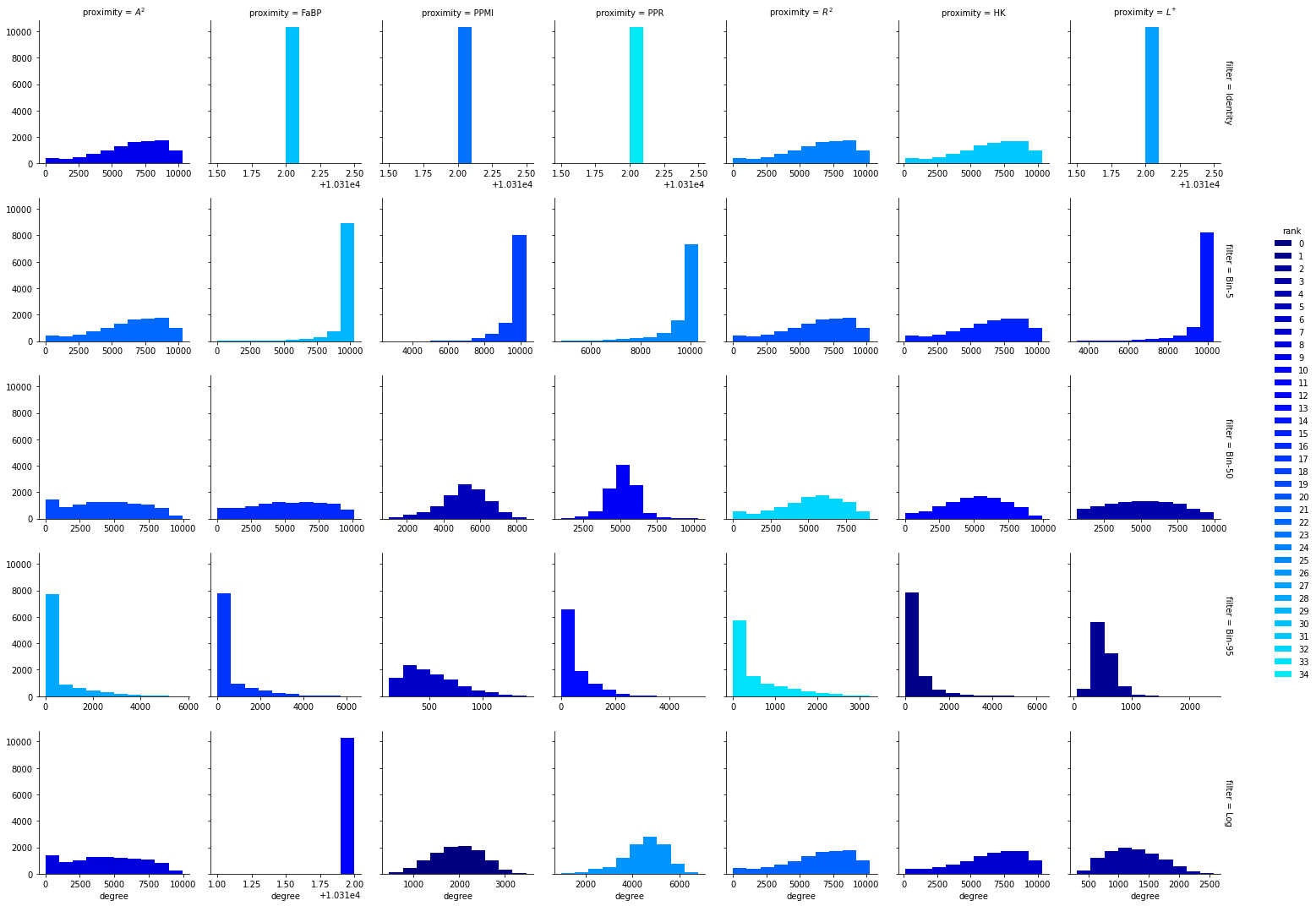}
    \caption{Degree (row sum) distributions of matrices resulting from all different combinations of $\proximity{}$ and $\nonlinearity{}$ on BlogCatalog.  In general, some of the best-performing embedding methods (darker colors) come from matrices whose row sum distributions follow a bell curve rather than a power law.}
    \label{fig:prox-degdist}
\end{figure*}

Many successful design choices produce a bell shape distribution of row sums.  For example, the \rampedLog nonlinearity filter (the best-performing nonlinearity on average) produces bell-shaped row sum distributions for all proximity matrices. \bin{95} produces a somewhat bell-shaped distribution for \ppmi, one of the best-performing proximity matrices.  In general, this lines up with our intuition to expect mostly moderate row sums.  %

\begin{observation}
\label{obs:prox-properties}
Some of the matrices yielding the best positional node embeddings have a bell curve rather than a power law distribution of row sums: that is, most nodes have moderate total proximity scores to all other nodes. \\
\end{observation}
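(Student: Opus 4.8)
The plan is to support this observation empirically rather than by formal derivation, since it concerns the \emph{shape} of a real-valued distribution arising from concrete proximity and nonlinearity choices evaluated on benchmark data. First I would, for each of the $7 \times 5$ combinations of proximity function $\proximity{}$ and nonlinearity $\nonlinearity{}$ considered in \S\ref{sec:exp-node-prox}, form the filtered matrix $\nonlinearSimMat = \nonlinearity{\proximity{\adj}}$ and compute its vector of row sums $r_i = \sum_{j} \nonlinearSimMat_{ij}$, which by construction aggregates node $i$'s total proximity to every other node. I would then plot the empirical distribution of the $\{r_i\}$ as in Fig.~\ref{fig:prox-degdist}, so that the qualitative shape---unimodal and concentrated (a ``bell curve'') versus heavy-tailed (a ``power law'')---can be read off directly.

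Next I would cross-reference each distribution shape against the downstream ranking of the corresponding embedding method, using the micro-F1 scores and average ranks already reported in Tab.~\ref{tab:rank-prox} and Fig.~\ref{fig:prox-classification}. Since the claim asserts only that \emph{some} of the best methods exhibit bell-shaped row sums, it suffices to exhibit concrete witnesses: I would point to the \rampedLog filter, which tops the average nonlinearity ranking and produces concentrated row sums across all proximity matrices, and to \ppmi under \bin{95}, a strong performer whose row sums are likewise roughly bell-shaped. Conversely, I would highlight matrices with clear heavy tails that rank worse, so as to establish that the observed association is not vacuous.

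The interpretive half of the statement---that a bell shape means ``most nodes have moderate total proximity''---follows essentially by definition, since a unimodal distribution concentrated away from both extremes places most row sums in an intermediate range. I would state this connection explicitly to motivate why such matrices should yield informative similarity objectives: neither starved of signal (too-small row sums) nor forcing a node to be embedded indiscriminately near many others (too-large row sums), matching the intuition laid out in App.~\ref{app:exploratory}.

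The main obstacle is that this is a qualitative, existentially quantified claim rather than a sharp theorem: there is no canonical, assumption-free test that cleanly separates ``bell curve'' from ``power law'' for a finite sample of $n$ row sums, and fitting a power-law exponent or running a goodness-of-fit test can be sensitive to binning and to the choice of tail threshold. I would therefore keep the conclusion deliberately modest---reporting the visual pattern together with a few unambiguous witnesses---rather than asserting a universal law, and I would flag the precise characterization of which intrinsic proximity-matrix statistics predict embedding quality as the open problem that App.~\ref{app:exploratory} is intended to prompt.
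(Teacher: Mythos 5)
Your proposal follows essentially the same route as the paper: App.~\ref{app:prox-stat-dist} likewise supports the observation empirically by plotting the row-sum distributions of $\nonlinearity{\proximity{\adj}}$ for all evaluated combinations (Fig.~\ref{fig:prox-degdist}), cross-referencing the shapes against downstream micro-F1 performance, and citing the very same witnesses you name---\rampedLog producing bell-shaped row sums across all proximity matrices and \ppmi under \bin{95}---together with the identical ``moderate total proximity'' intuition. Your added caution about the absence of a sharp bell-versus-power-law test is consistent with the paper's deliberately qualitative, existential phrasing, so nothing is missing.
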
 
\subsection{Structural Embeddings}
\label{app:struc-stat-dist}
\begin{figure*}[h!]
    \centering
    \includegraphics[width=\linewidth]{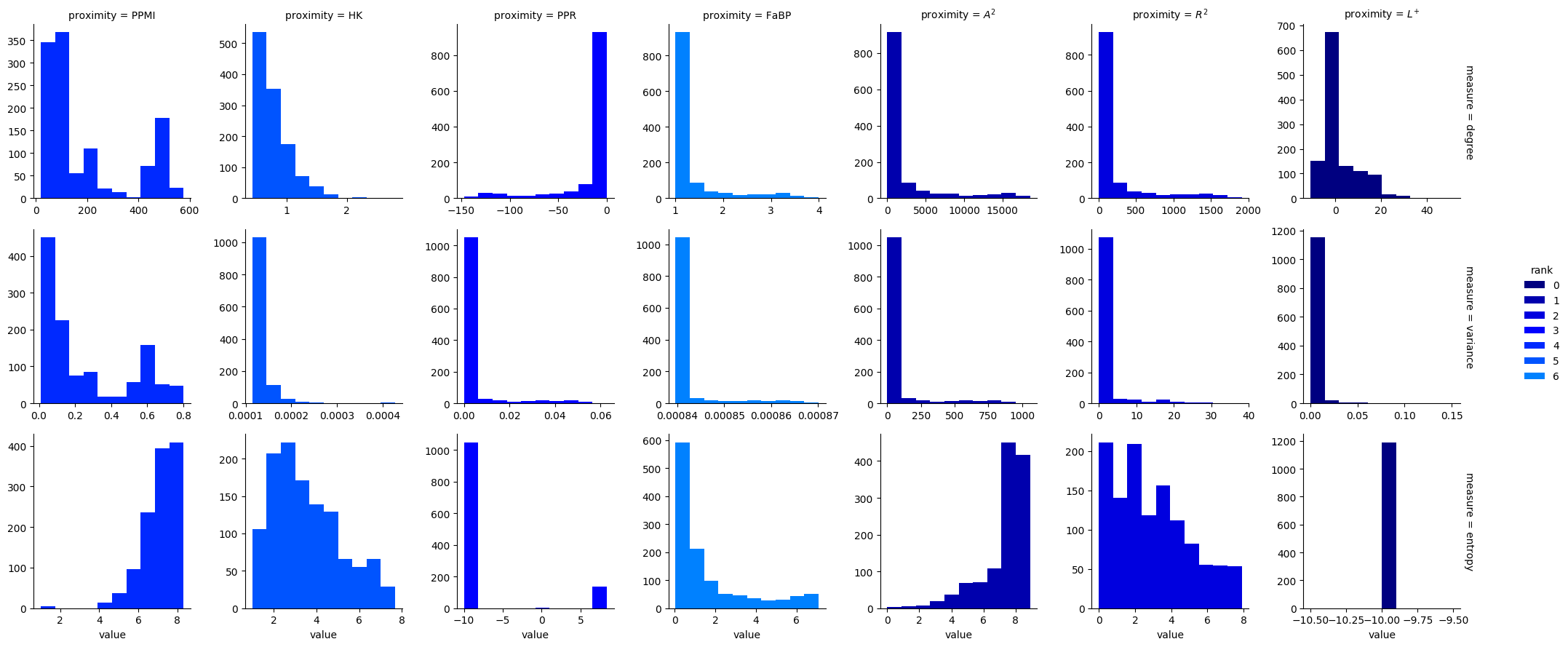}
    \caption{Distribution of row distribution statistics (degree, variance, and entropy) of proximity matrices (without nonlinearity) on USA Airports dataset used for structural embeddings.  Methods leading to more accurate structural embeddings have darker color.  Some of the best embeddings come from matrices whose rows' variance and entropy follow a power law distribution.}
    \label{fig:struct-dist-stats}
\end{figure*}

The CFS embedding method treats each row of the proximity matrix as a probability distribution. When learning structural embeddings using CFS, GraphWave notes that two cases will be uninformative for structural embedding: if the distribution of row entries is either too uniform or if it has too few nonzero values.

\noindent \textbf{Setup.}
To measure the row-wise uniformity of the matrices, we consider the variance of each row.  Meanwhile, we use entropy to diagnose rows with a few large entries and otherwise mostly small ones.   Thus, we plot the row-wise distribution of variances and entropy for each proximity matrix, as well as the distribution of row sums as in \S~\ref{app:prox-stat-dist}.  For brevity, we do not consider nonlinearity.  Note that for \ppr and $\invlap$, we truncate entropy values close to $-\infty$ to $-10$ for ease of visualization. 

\noindent \textbf{Results}.  
For most proximity matrices, the row-wise distribution of all three statistics tends to follow a power law distribution.  The row-wise sum and variance distributions for the \ppmi matrix, which generally leads to some of the the weaker structural embedding methods, tend to follow this pattern much more noisily.  Proximity matrices \ppr and \fabp, on the other hand, tend to follow an extreme power law distribution with a very thin tail.  This may indicate that a moderate power law distribution may be the most informative for structural embeddings.  The contrast with proximity-preserving embeddings (\S~\ref{app:prox-stat-dist}), where the most successful embeddings tended to come from matrices with a bell-curve distribution of row sums, corroborates our finding that the best positional and structural node embedding methods tended to use very different design choices. 

\begin{observation}
\label{obs:struc-properties}
Proximity matrices that yield good structural embeddings often follow a power law distribution of row-wise statistics.  The contrast with Obs.~\ref{obs:prox-properties} indicates that proximity matrices that lead to successful positional or structural node embeddings may have fundamentally different properties.  
\end{observation}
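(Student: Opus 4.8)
The plan is to establish this observation empirically through a controlled measurement protocol, since it concerns a statistical property of the matrices $\nonlinearSimMat = \nonlinearity{\proximity{\adj}}$ that we embed rather than a closed-form identity. The guiding principle comes from the CFS embedding function of \S\ref{sec:node}, which (by Theorem~\ref{thm:embfunc}) yields structural embeddings by treating each row of $\nonlinearSimMat$ as an empirical distribution of a node's proximity scores and summarizing it via its characteristic function. Following the GraphWave~\cite{graphwave} insight that a row is uninformative for structural embedding when it is either too uniform or has too few nonzero entries, I would quantify each row by three complementary statistics: its sum (degree), its variance (uniformity), and its entropy (concentration). The first step is therefore to fix the structural benchmark (the USA airports graph, paralleling App.~\ref{app:struc-stat-dist}), compute every proximity matrix produced by each $\proximity{}$ in our library (\ppmi, \fabp, \heat, \ppr, \adjpower, \rwpower, $\invlap$), and extract the row-wise empirical distribution of each of these three statistics.

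Next I would characterize the \emph{shape} of each resulting distribution, distinguishing a heavy-tailed power-law regime from a concentrated bell-curve regime. Rather than relying on visual inspection of the histograms alone, I would corroborate the classification with log-log plots of the complementary cumulative distributions and, where the sample sizes permit, a maximum-likelihood power-law fit together with a goodness-of-fit test. The crucial step is then to overlay the downstream structural classification accuracy of each $(\proximity{}, \nonlinearity{})$ choice onto these distributions---encoding accuracy by color---so that the association between a heavy-tailed, power-law shape and strong structural performance becomes directly visible. I expect this to reveal that the strong performers (\heat, \adjpower, \rwpower, \fabp) produce moderately heavy-tailed row-statistic distributions, while \ppmi, the weakest structural proximity, follows the pattern only noisily, and the extreme-tailed matrices \ppr and $\invlap$ serve as cautionary endpoints whose tails are too thin.

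The second half of the claim---that these properties contrast with those favoring positional embeddings---follows by juxtaposing these results with the row-sum analysis behind Obs.~\ref{obs:prox-properties}, where the best positional embeddings arose from \emph{bell-curve} row-sum distributions. Establishing the contrast requires only showing that the same proximity families fall into opposite shape regimes under the two embedding functions, which is consistent with the divergent best-performing design choices already reported in \S\ref{sec:exp}. No new computation is needed here beyond placing the two distribution analyses side by side.

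The main obstacle I anticipate is that this is an inherently correlational, exploratory argument rather than a theorem. ``Power law'' is delicate to verify on the small, finitely supported row-statistic samples that these modest graphs yield, so any fit will be noisy and the power-law versus bell-curve dichotomy will be a tendency rather than a sharp dichotomy. Moreover, the link from distribution shape to embedding quality is a statistical association, not a causal derivation: heavy-tailed rows are plausibly more discriminative under CFS because they avoid the two degenerate cases GraphWave identifies, but this is an informal explanation, not a guarantee. I would therefore hedge the conclusion as an observed tendency (``often,'' ``some of the best''), present the distributions transparently so readers can judge the shapes themselves, and invoke the CFS mechanism only as motivation for \emph{why} non-degenerate, moderately heavy-tailed rows should help, without claiming a formal bound.
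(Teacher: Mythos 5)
Your protocol is essentially the paper's own support for this observation: the paper likewise computes every proximity matrix (without nonlinearity) on the USA Airports dataset, plots the row-wise distributions of sum, variance, and entropy with darker colors marking more accurate structural methods, reads off power-law versus bell-curve shapes, and establishes the contrast simply by juxtaposition with the row-sum analysis behind Obs.~\ref{obs:prox-properties}, hedging the conclusion as an exploratory tendency exactly as you do. The only differences are cosmetic---the paper uses visual inspection rather than maximum-likelihood power-law fits, and its actual findings place \fabp (together with \ppr) in the extreme thin-tailed regime rather than among the moderately heavy-tailed strong performers you anticipated---which does not change the method.
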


\end{document}